\newcommand{\mycite}[1]{$^{\mbox{\rm\scriptsize\cite{#1}}}\!$}
\begin{document}
\newtheorem{definition}{\bf Definition}
\newtheorem{theorem}{\bf Theorem}
\newtheorem{proposition}{\bf Proposition}
\newtheorem{lemma}{\bf Lemma}
\newtheorem{corollary}{\bf Corollary}
\newtheorem{example}{\bf Example}
\newtheorem{remark}{\bf Remark}
\newtheorem{Table}{\bf Table}
\newtheorem{Sentence}{\bf Step}
\newtheorem{Branch}{}

%%%%%%%%%%%%%%%%%%%%%%%%%%%%%%%%%%%%%%%%%%%%%
%      Letters
%%%%%%%%%%%%%%%%%%%%%%%%%%%%%%%%%%%%%%%%%%%%%
\def\T {\ensuremath{{\bf T}}}
\def\N {\ensuremath{{\mathbb N}}}
\def\A {\ensuremath{{\bf A}}}
\def\B {\ensuremath{{\bf B}}}
\def\P {\ensuremath{{\bf P}}}
\def\S {\ensuremath{{\mathbb S}}}
\def\E {\ensuremath{{\bf E}}}
\def\H {\ensuremath{{\bf H}}}
\def\V {\ensuremath{{\rm V}}}
\def\D {\ensuremath{{\rm D}}}
\def\PF {\ensuremath{{\bf PF}}}
\def\TH {\ensuremath{{\bf TH}}}
\def\RS {\ensuremath{{\mathbb T}}}
\def\HCTD {\ensuremath{{\tt HCTD}}}
\def\HPCTD {\ensuremath{{\mathrm HPCTD}}}
\def\CTD {\ensuremath{{\mathrm CTD}}}
\def\PCTD {\ensuremath{{\mathrm PCTD}}}
\def\WUCTD {\ensuremath{{\mathrm WUCTD}}}
\def\RSD {\ensuremath{{\mathrm RSD}}}
\def\FWCTD {\ensuremath{{\mathrm FWCTD}}}
\def\SWCTD {\ensuremath{{\mathrm SWCTD}}}
\def\ARSD {\ensuremath{{\tt RSD}}}
\def\APRSD {\ensuremath{{\tt WRSD}}}
\def\CCTD {\ensuremath{{\tt CTD}}}
\def\ASWCTD {\ensuremath{{\tt SWCTD}}}
\def\ASMPD {\ensuremath{{\tt SMPD}}}
\def\AHPCTD {\ensuremath{{\tt HPCTD}}}
\def\TDU {\ensuremath{{\mathrm TDU}}}
\def\ATDU {\ensuremath{{\tt TDU}}}
\def\RDU {\ensuremath{{\mathrm RDUForZD}}}
\def\ARDU {\ensuremath{{\tt RDUForZD}}}
\newtheorem{Rules}{{\bf Rule}}

%%%%%%%%%%%%%%%%%%%%%%%%%%%%%%%%%%%%%%%%%%%%%%
%       For Triangular Decomposition
%%%%%%%%%%%%%%%%%%%%%%%%%%%%%%%%%%%%%%%%%%%%%%
\newcommand{\disc}[1]{\mbox{{\rm disc}$(#1)$}}
\newcommand{\alg}[1]{\mbox{{\rm alg}$(#1)$}}
\newcommand{\SAT}[1]{\mbox{{\rm sat}$(#1)$}}
\newcommand{\SQR}[1]{\mbox{{\rm sqrt}$(#1)$}}
\newcommand{\ideal}[1]{\langle#1\rangle}
\newcommand{\I}[1]{\mbox{{\rm I}$_{#1}$}}
\newcommand{\ldeg}[1]{\mbox{{\rm ldeg}$(#1)$}}
\newcommand{\iter}[1]{\mbox{{\rm iter}$(#1)$}}
\newcommand{\mdeg}[1]{\mbox{{\rm mdeg}$(#1)$}}
\newcommand{\lv}[1]{\mbox{{\rm lv}$_{#1}$}}
\newcommand{\mvar}[1]{\mbox{{\rm mvar}$(#1)$}}
\newcommand{\prem}[1]{\mbox{{\rm prem}$(#1)$}}
\newcommand{\pquo}[1]{\mbox{{\rm pquo}$(#1)$}}
\newcommand{\rank}[1]{\mbox{{\rm rank}$(#1)$}}
\newcommand{\res}[1]{\mbox{{\rm res}$(#1)$}}
\newcommand{\cls}[1]{\mbox{{\rm cls}$_{#1}$}}
\newcommand{\sat}[1]{\mbox{{\rm sat}$(#1)$}}
\newcommand{\sep}[1]{\mbox{{\rm sep}$(#1)$}}
\newcommand{\tail}[1]{\mbox{{\rm tail}$(#1)$}}
\newcommand{\zm}[1]{\mbox{{\rm MZero}$(#1)$}}
\newcommand{\zero}[1]{\mbox{{\rm Zero}$(#1)$}}
\newcommand{\rd}[1]{\mbox{{\rm Red}$(#1)$}}
\newcommand{\map}[1]{\mbox{{\rm map}$(#1)$}}
\newcommand{\op}[1]{\mbox{{\rm op}$(#1)$}}

\title{ {Generic Regular Decompositions for Parametric Polynomial Systems}}
\author{Zhenghong Chen\thanks{Corresponding author Email:chenzhenghong@pku.edu.cn}\hspace{1.0em}Xiaoxian Tang\hspace{1.0em}Bican Xia\\%,  email:  xbc@math.pku.edu.cn}\\
         {\small LMAM \& School of Mathematical Sciences}\\
         {\small Peking University,  Beijing 100871,  China}\\
         {\small chenzhenghong@pku.edu.cn,\ \ tangxiaoxian@pku.edu.cn, \ \ xbc@math.pku.edu.cn}}
\date{}
\maketitle
\begin{abstract}

This paper presents a generalization of our earlier work in \cite{tang}. In this paper, the two concepts,  generic regular decomposition (GRD) and regular-decomposition-unstable (RDU) variety introduced in \cite{tang} for generic zero-dimensional systems, are extended to the case where the parametric systems are not necessarily zero-dimensional. %generic regular system decomposition and
An algorithm is provided to compute GRDs and the associated RDU varieties of parametric systems simultaneously on the basis of the algorithm for generic zero-dimensional systems proposed in \cite{tang}.  Then the solutions of any parametric system can be represented by the solutions of finitely many regular systems and the decomposition is stable at any parameter value in the complement of the associated RDU variety of the parameter space. The related definitions and the results presented in \cite{tang} are also generalized and a further discussion on RDU varieties is given from an experimental point of view.  The new algorithm has been implemented on the basis of DISCOVERER \cite{discover} with Maple 16 and experimented with a number of benchmarks from the literature. % \cite{changbo,  zxq, sun,  Montes,  KN}.

%the efficiency of computing {\sc ctd} has been successfully improved for most of
%the examples.
%, particularly, the P3P problem has
%been solved efficiently in version of triangular
%decompositions. \par
~\\
{\bf Keywords: }
parametric polynomial system, regular-decomposition-unstable variety, generic regular decomposition
\end{abstract}

\section{Introduction}\label{SecIntro}

 As is well known, solving parametric polynomial system plays a key role in many application fields such as automated geometry theorem deduction,  stability analysis of dynamical systems,  robotics and so on. %\cite{changbo, Montes, CGS}. %The crucial problems are (1) for what parameter values the polynomial system has solutions,  and (2) whether the solutions can be expressed by some simple representations.
 To solve a parametric system symbolically, a basic idea is to transform the system into new systems with special structures or properties so that the solutions of the original system can be handled via studying the solutions of the new systems, which is relatively easy.
 % In the last few years, the methods based on {\it triangular decompositions}  for solving these two problems \cite{marco, changbo, gxs1992, kalk, maza, dkwang, wangi, wu, yhx01, xia, zjzi} have be paid widely attention to.
Remarkable examples of such methods are the algorithms for computing {\em comprehensive Gr\"obner systems} (CGS) and {\em comprehensive Gr\"obner bases} (CGB) \cite{CGS, sun,Montes,KN,SS,newSS}. The methods based on {\em triangular decompositions} are another kind of such examples \cite{marco, changbo, gxs1992, kalk, maza, dkwang, wangi, wu, yhx01, xia, zjzi}.

%{\bf There are two milestones in the research of solving parametric polynomial system by triangular decomposition. One is   Wu's methods \cite{wu}, since which, methods based on triangular decomposition start to be widely studied. Another milestone is the concept about triangular set, {\em regular chain} (or {\em normal chain}) introduced by Kalkbrener \cite{kalk} and Yang and Zhang \cite{zjzi} independently. From then on, $p-$chains\footnotemark \footnotetext{The concept $p-$chain is stronger than regular chain,  see more details in \cite{gxs1992}.} proposed by Gao and Chou and regular system proposed by wang \cite{wangi,  wang, wangEpsilon} are used to solve parametric system. Another important concept  is Comprehensive Triangular Decomposition (CTD)\cite{changbo} introduced  by Chen {\it et al}, which divides the whole parametric space into disjoint parts and provides solutions in every parts.}

Since Wu's work \cite{wu}, lots of well-known methods based on triangular decompositions have been proposed. An essential concept, ``regular chain" (or ``normal chain"), and  algorithms for computing regular chain decomposition have been introduced by Kalkbrener \cite{kalk} and Yang and Zhang \cite{zjzi} independently. For parametric systems, Gao and Chou proposed a method in \cite{gxs1992}  for identifying all parametric values for which a given system has solutions and giving the solutions by $p-$chains %\footnotemark \footnotetext{The concept $p-$chain is stronger than regular chain,  see more details in \cite{gxs1992}.}
without a partition of the parameter space.
Wang gave an efficient algorithm for computing regular system decomposition \cite{wangi,  wang, wangEpsilon}, which is a generalization of regular chain decomposition. %It should be noticed that,  due to their strong projection property,  the regular systems or series computed by {\tt RegSer}\footnotemark\footnotetext{http://www-calfor.lip6.fr/\~{}wang/epsilon/} may also be used as representations for parametric systems.
The concept of comprehensive triangular decomposition (CTD) introduced by Chen {\it et al.} in \cite{changbo} is an analogue of the CGS for solving parametric polynomial systems. %Algorithms for computing regular chain decompositions and CTDs have been implemented as central functions of {\tt RegularChains} library in Maple 16.

Two new concepts,   generic regular decomposition and regular-decomposition-unstable (RDU) variety for generic zero-dimensional  systems,  are introduced in \cite{tang} and an algorithm is proposed for computing a generic regular decomposition and the associated RDU variety of a given generic zero-dimensional system simultaneously. The solutions of the given system can be expressed by finitely many zero-dimensional regular chains if the parameter value is not on the RDU variety.  The so-called weakly relatively simplicial decomposition (WRSD) plays a crucial role in the algorithm,  which is based on the theories of subresultants.

 %In this paper, the concept,  generic regular chain decomposition, introduced in \cite{tang} is extended  to generic regular system decomposition and an algorithm is provided to compute generic regular system decomposition of
%parametric systems and the associated RDU varieties simultaneously on the base of the algorithm for generic zero-dimensional systems proposed in \cite{tang}.  Then the solutions of  any parametric system can be represented by the solutions of  finitely many regular systems and for any  parameter value  in the parameter space but not in the associated RDU variety, the regular system decomposition is stable.  The new algorithm has been implemented on the basis of DISCOVERER \cite{discover} with Maple 16 and experimented with a number of benchmarks from the literature
In this paper, the concepts,  generic regular decomposition (GRD) and regular-decomposition-unstable (RDU) variety, introduced in \cite{tang} for generic zero-dimensional systems are extended to the case where the parametric systems are not necessarily zero-dimensional. %generic regular system decomposition and
An algorithm is provided to compute GRDs and the associated RDU varieties of parametric systems simultaneously on the base of the algorithm for generic zero-dimensional systems proposed in \cite{tang}.  Then the solutions of any parametric system can be represented by the solutions of finitely many regular systems and the decomposition is stable at any parameter value in the complement of the associated RDU variety of the parameter space. The new algorithm has been implemented on the base of DISCOVERER \cite{discover} with Maple 16 and experimented with a number of benchmarks from the literature \cite{changbo,  zxq, sun,  Montes,  KN}.  Empirical results are also presented to show the good performance of the algorithm.  In other words, this paper presents a generalization of our earlier work in \cite{tang}. %,  since the authors only discuss generic zero-dimensional systems in \cite{tang} and problems (1) and (2) for parametric
%polynomial systems are answered in a generic sense.

First of all, we need to introduce the idea proposed in \cite{tang} briefly.  For a given generic zero-dimensional system $\P$ with $n$ variables and $d$ parameters,  we  considered the parameters as ``constants" and proposed Algorithm {\tt RDUForZD} for computing a so-called {\em generic regular decomposition} ${\mathbb T}$
of  $\P$ in $\overline{K(U)}^{n}$  such that ${\rm V}_{\overline{K(U)}}({\bf P})=\cup _{{\T}\in {\mathbb T}}{\rm V}_{\overline{K(U)}}({\bf T}\backslash \I{{\T}})$, where ${\mathbb T}$ is a set of regular chains.
At the same time, the algorithm would obtain a parametric polynomial such that the regular decomposition was {\em stable} at any parametric point outside the variety (called RDU variety) generated by the parametric polynomial. Roughly speaking, ``stable at a parametric point" means that the regular decomposition remains after we substitute the point for the parameters in $\P$ and ${\mathbb T}$ (see Definition \ref{DEsus}). As a result,  the original generic zero-dimensional system is ``solved" except for the case where parameters are on the RDU variety. That is why the decomposition is called {\em generic} regular decomposition.

Now we would like to show some new ideas of this paper. If the given system is not generic zero-dimensional, to obtain a decomposition with similar properties as in the zero-dimensional case, we choose to express the solutions of the system by finitely many {\em regular systems} \cite{wang} instead of regular chains. So we need to generalize the concept ``generic regular (chain) decomposition" introduced in \cite{tang} into ``generic regular (system) decomposition" (see Definitions \ref{regsys},
\ref{DeRSSwell} and \ref{DEsus}). For solving a positive dimensional system, a natural idea is to view some variables as parameters and call recursively the algorithm for generic zero-dimensional systems proposed in \cite{tang}. However, to prove the correctness of this procedure, we need to study the properties of characteristic sets under specifications carefully (see Lemma \ref{wuset} and Corollary \ref{wudecom}). Besides, it is worth to notice that we have two different interpretations for the results computed by Algorithm {\tt ZDToRC} and both of them play a key role in the proof of the correctness (see Lemma \ref{LEpositive}).  Finally, we give an algorithm which, for any parametric system $\P$, computes a finite set $\mathbb{TH}$  of regular systems in $K[U][X]$ and a polynomial $B\in K[U]$, such that
\begin{enumerate}
  \item ${\V}_{\overline{K(U)}}({\P})=\cup_{[{\T}, H]\in {\mathbb{TH}}}{\V}_{\overline{K(U)}}({\T}\backslash H)$; and
  \item for any $a\in \overline{K}^d\backslash {\V}^{U}(B)$, ${\V}({\P}(a))=\cup_{[{\T}, H]\in {\mathbb{TH}}}{\V}({\T}(a)\backslash H(a))$ and $[{\T},H]$ specializes well at $a$ for any $[{\T},H]\in {\mathbb{TH}}$.
\end{enumerate}
Please see Algorithm \ref{ALsus} in this paper for more details.
What's more, for different orderings of variables, the efficiency of the algorithm can be different and the RDU varieties can be totally different.

At the end of this section,  it is worth to point out that the algorithm provided in this paper has a different feature compared to some existing algorithms. The algorithm for computing regular system decomposition proposed in  \cite{wangi,wang} uses the so-called {\em variable elimination}, which computes a main branch at first and then gets the other branches one by one.  An {\em incremental algorithm}, introduced in \cite{changbo,changbo2011} for computing regular chain decomposition, computes a regular chain decomposition for some polynomials in the given system at first and then intersects the other polynomials with the regular chains one by one.  The algorithm proposed in this paper makes use of a {\em hierarchical strategy}. From an experimental point of view, different strategies are suitable for different benchmarks (see Section \ref{sectionexamples}).

%Using  Algorithm \ref{ALsus} of generic regular decomposition, we can get a comprehensive triangular decomposition (CTD) of a given parametric system ${\P}$ step by step. Firstly, for certain variables $X$ and parameters $U$, we call ${\tt RDU}({\P},X)$ and get a set of regular systems and a parametric polynomial $B \in K[U]$. Then, let $U_1= U\backslash \{u\}$  and $X_1= X \cup \{u\}$ where $u \in U$ and we call ${\tt RDU}({\P} \cup \{B\},X_1)$ to get a set of regular systems and a parametric polynomial $B_1 \in K[U_1]$. Continuing this iteration, until $U_i=\emptyset$ and $B_i \in K$ for some $i$ ,  we can get a comprehensive triangular decomposition finally. And then, what can we  benefit from this method? According to this method, we can stop at any step of ${\tt RDU}$ iteration. Then we get a generic regular  decomposition and a set of parametric polynomials which can determinate a RDU variety (low dimensional variety). For some huge problems, limited by the computational capacity of micro computer, we can not get an entire solution with all the three methods above. But we can get a partial solution (not entire).  In some cases, the partial solution can satisfy our requirements.

The paper is organized as follows. Section \ref{sfuhaoshuoming} provides basic definitions and concepts that are needed to understand the main algorithm. Section \ref{sectionus} contains the main algorithm,  namely Algorithm \ref{ALsus},  and some relative subalgorithms. Also we review the description of algorithms in our former article. Besides,  proofs for these algorithms are presented in this section.  Some illustrative examples, the empirical data and comparison with previous work along with several implementation details are presented in Section \ref{sectionexamples}.  Section \ref{con1} concludes the paper with a discussion on our future work along this direction.

\section{Preliminaries}\label{sfuhaoshuoming}

The following paragraphs give a brief outline of the vocabulary and tools we will be using throughout the paper. All concepts without precise definitions can be found in \cite{cox, wu, xia}. $\mathbb R$ and $\mathbb C$ stand for the feild of real numbers and the field of complex numbers,  respectively.

         Suppose $\{u_1,  \ldots,  u_d,  x_1,  \ldots,  x_n\}$ is a set of indeterminates with a given order $u_1\prec. . . \prec u_d\prec x_1\prec. . . \prec x_n$
        where $\{u_1,  \ldots,  u_d\}$ and $\{x_1,  \ldots,  x_n\}$ are the sets of parameters and variables,  respectively.
       Let $U=\{u_1,  \ldots,  u_d\}$ and $X=\{x_1,  \ldots,  x_n\}$.
       Suppose $K$ is a field and
        $\overline {K}$ denotes its algebraic closure. Let $K[U]$ be the ring of polynomials in $U$ with coefficients in $K$ and $K(U)$ be the rational function field. A non-empty finite subset $\P$ of $K[U][X]$ is said to be a {\em polynomial system} or {\em system}. If ${\P}\subset K[U][X]\backslash K[X]$,  it is a {\em parametric polynomial system} or {\em parametric system}. If ${\P}\subset K[X]$,  it is a {\em constant polynomial system} or {\em constant system}.

        For a non-empty finite subset $\P\subset$$K[U][X]$ ($\overline{K}[X]$),  $\ideal{{\P}}_{K[U][X]}$ ($\ideal{{\P}}_{\overline{K}[X]}$) denotes the ideal generated by ${\P}$ in $K[U][X]$ ($\overline{K}[X]$) and $\sqrt{\ideal{{\P}}_{K[U][X]}}$ ($\sqrt{\ideal{{\P}}_{\overline{K}[X]}}$) denotes the radical ideal of $\ideal{{\P}}_{K[U][X]}$ ($\ideal{{\P}}_{\overline{K}[X]}$).
        For any $F$ in $K[U][X]\backslash \{0\}$ ($\overline{K}[X]\backslash \{0\}$) and for any $x\in X$,  if $x$ appears in $F$,   $F$ can be regarded as a univariate polynomial in $x$,  namely $F=C_0x^m+C_1x^{m-1}+\ldots+C_m$ where $C_0, C_1, \ldots, C_m$ are polynomials in $K[U][X\backslash \{x\}]$ ($\overline{K}[X\backslash \{x\}]$) and $C_0\neq 0$. Then $m$ is the {\em leading degree} of $F$ {\it w.r.t.} $x$ and is denoted by $\deg(F, x)$. Note that if $x$ does not appear in $F$,  $\deg(F, x)=0$. If there exists $p$ $(1\leq p\leq n)$ such that $\deg(F, x_p)>0$ and for every $i$ $(p<i\leq n)$,  $\deg(F, x_i)=0$,  then the {\em class} of $F$ is $p$. If $\deg(F, x_i)=0$ for every $i$ ($1\leq i\leq n$),   then the {\em class} of $F$ is $0$. The class of $F$ in $K[U][X]\backslash \{0\}$ ($\overline{K}[X]\backslash \{0\}$) is denoted by $\cls{F}$. If $\cls{F}>0$,  $x_{\cls{F}}$ is the {\em main variable} of $F$ and is denoted by $\mvar{F}$. Assume that $F=C_0x_p^m+C_1x_p^{m-1}+\ldots+C_m$ where $p=\cls{F}>0$ and $C_0\neq 0$,  then $C_0$, denoted by $\I{F}$, is the {\em initial} of $F$ and $x_p^m$, denoted by $\rank{F}$, is the {\em rank} of $F$.

 For any ${\bf P}\subset K[U][X]$,
$\V_{\overline{K}}(\bf P)$ denotes the set
$\{(a_1,  \ldots,  a_{d+n})\in \overline{K}^{d+n}|P(a_1,  \ldots,  a_{d+n})=0,  \forall P\in \P\}$. For any ${\bf B}\subset K[U]$,  ${\rm V}^{U}({\bf B})$ denotes the set
$\{(a_1,  \ldots,  a_d)\in \overline{K}^{d}|B(a_1,  \ldots,  a_d)=0,  \forall B\in \B\}.$
And ${\rm V}_{\overline{K(U)}}({\bf P})$ denotes the set
$\{(a_1,  \ldots,  a_n)\in \overline{K(U)}^{n}|P(U, a_1,  \ldots,  a_n)=0,  \forall P\in \P\}$.
        For any $F\in K[U][X]$,
  the coefficients $B_1,  \ldots,  B_t$ of $F$  in $X$ are polynomials in $K[U]$,  then
  ${\rm V}^{U}(F)$ denotes ${\rm V}^{U}(\{B_1,  \ldots,  B_t\})$.
   Note that for two finite subsets ${\bf P}$ and ${\bf H}$ of $K[U][X]$,
        ${\rm V}_{\overline{K(U)}}({\bf P}\backslash {\bf H})$ denotes the set
        ${\rm V}_{\overline{K(U)}}({\bf P})\backslash {\rm V}_{\overline{K(U)}}({\bf H})$. Similarly,  we can have $\V({\bf P}\backslash {\bf H})$,  $\V_{\overline{K}}({\bf P}\backslash {\bf H})$ and ${\rm V}^{U}({\bf P}\backslash {\bf H})$.
         $\dim(\mathcal{V})$ denotes the {\em dimension} of an {\em affine variety} $\mathcal{V}$.

 %A {\em term} in $K[U][X]$ is a power product $au_1^{i_1}\ldots u_d^{i_d}x_1^{m_1}\ldots x_n^{m_n}$ where $a\in K$ and $i_1,\ldots i_d,m_1,\ldots m_n$ are non-negative integers and the term can also be written as $aU^IX^M$ or $aY^\alpha$ where $I=(i_1,\ldots i_d)$, $M=(m_1,\ldots m_n)$ and $\alpha=(i_1,\ldots i_d,m_1,\ldots m_n)$. If $a=1$, the term is a {\em monomial} in K[U][X]. If $a\neq 0$ and there exits $m_p$ ($1\leq p\leq m$) such that $m_p\neq 0$ and when $p>1$, $m_i=0$ for any $i$ $(i<p)$, then $p$ is the {\em class} of the term. If $a=0$ or $m_i=0$ for any $i$ $(1\leq i\leq p)$, then the {\em class} of the term is $0$. Consider two vectors of non-negative integers $\alpha=(a_1,\ldots,a_s)$ and $\beta=(\beta_1,\ldots,\beta_s)$, we say $\alpha\prec \beta$ if there exist $k$ such that $a_{k+1}=b_{k+1},\ldots,a_s=b_s$ and $a_k<b_k$. For two terms $\lambda=au_1^{i_1}\ldots u_d^{i_d}x_1^{m_1}\ldots x_n^{m_n}$ and $\mu=bu_1^{j_1}\ldots u_d^{j_d}x_1^{l_1}\ldots x_n^{l_n}$ in $K[U][X]$ ($a\neq 0$ and $b\neq 0$), we say $\lambda\prec\mu$ if $(i_1,\ldots i_d,m_1,\ldots m_n)\prec(j_1,\ldots j_d,l_1,\ldots l_n)$.
        %Then $a_1Z^{\alpha_1}$, $a_1$ and $Z^{\alpha_1}$ are said to be the {\em leading term},the {\em leading coefficient} and the {\em leading monomial} of $F$ respectively.
        %the leading degree of $F$ regarded as a univariate polynomial in $x$ is denoted by $\deg(F, x)$.

         \textbf{\textsf{Triangular Set}}.           A non-empty finite set ${\T}=\{T_1, T_2\ldots, T_r\}$ of polynomials in $K[U][X]$ ($\overline{K}[X]$)  is  a {\em triangular set} in $K[U][X]$ ($\overline{K}[X]$) if $0<\cls{T_1}<\cls{T_2}<\ldots<\cls{T_r}$. For a triangular set $\T$ in $K[U][X]$ ($\overline{K}[X]$),  $\I{{\T}}$,  $\mvar{{\T}}$  and $\rank{{\T}}$ denote $\Pi_{T\in {\T}}\I{T}$,  $\{\mvar{T}|T\in {\T}\}$ and $\{\rank{T}|T\in {\T}\}$, respectively. The {\em saturated ideal} of a triangular set $\T$ in $K[U][X]$ is defined as the set $\{F\in K[U][X]|\I{\T}^sF\in \ideal{{\T}}_{K[U][X]}$ for some positive integer $s\}$ and is denoted by $\SAT{\T}_{K[U][X]}$. Similarly,  the {\em saturated ideal} of a triangular set $\T$ in  $\overline{K}[X]$ is defined as the set $\{F\in \overline{K}[X]|\I{\T}^sF\in \ideal{{\T}}_{\overline{K}[X]}$ for some positive integer $s\}$ and is denoted by $\SAT{\T}_{\overline{K}[X]}$.
 Suppose $F\in K[U][X]$ ($\overline{K}[X]$) and $\T$ is a triangular set in $K[U][X]$  ($\overline{K}[X]$),  then $F$ is {\em reduced} {\it w.r.t.} $\T$ if $\deg(F, \mvar{T_i})<\deg(T_i, \mvar{T_i})$ for every $i$ $(1\leq i\leq r)$. A triangular set ${\T}=\{T_1, T_2\ldots, T_r\}$ in $K[U][X]$ ($\overline{K}[X]$) is a {\em non-contradictory ascending chain} in $K[U][X]$ ($\overline{K}[X]$) if $T_i$ is reduced {\it w.r.t.} $\{T_1, \ldots, T_{i-1}\}$ for every $i$ $(2\leq i\leq r)$. A single-element set $\{F\}\subset K[U]$ ($\{F\}\subset \overline{K}$) is  a {\em contradictory ascending chain} in $K[U][X]$ ($\overline{K}[X]$) if $F\neq 0$. Remark that an {\em ascending chain} is either a non-contradictory ascending chain or a contradictory ascending chain.
         %Remark that an {\em ascending chain} is either a non-contradictory ascending chain or a contradictory ascending chain.

       %The set $\{\rank{F}|F\in
        %T\}$ is denoted by $\rank{\T},$ where $\rank{F}$ is the leading monomial of $F$ regarded as a univariate
        %polynomial in $\mvar{F}$ and $\{\mvar{F}|F\in \T\}$ is denoted by $\mvar{\T}$ . The sets
%$\{F\in
  %T|\mvar{F}\in U\}$ and $\{F\in T|\mvar{F}\in X\}$ are denoted by ${\bf TU}$ and ${\bf TX}$, respectively.
%The set ${\rm V}^{U}({\bf TU}\backslash \res{{\rm I}_{{\bf TX}}, {\bf TX}})$ is denoted by ${\rm D}^{U}({\bf T})$ and
  %${\rm V}^{U}({\bf TU}\backslash \res{F\cdot {\rm I}_{{\bf TX}}, {\bf TX}})$ is denoted by ${\rm D}^{U}([{\bf T}, F])$.

        \textbf{\textsf{Successive Pseudo Remainder}}.            For two polynomials $F$ and $P$ in $K[U][X]$ ($\overline{K}[X]$) and a variable $x\in X$,  the {\em pseudo remainder} of $F$ {\em pseudo-divided} by $P$ {\it w.r.t.} $x$ is
        denoted by $\prem{F,  P,  x}$. Particularly, $\prem{F, P, \mvar{P}}$ is denoted by $\prem{F, P}$.
        For a polynomial $F\in K[U][X]$ ($\overline{K}[X]$) and a triangular set ${\T}=\{T_1,  . . . ,  T_r\}$ in $K[U][X]$ ($\overline{K}[X]$),
        the {\em successive pseudo remainder} \cite{zjzi} of $F$ {\it w.r.t.} ${\T}$
        is denoted by $\prem{F,  {\T}}$,  namely
\[\prem{F,  {\T}}=\prem{\ldots\prem{\prem{F,T_r},T_{r-1}},\ldots,T_1}.\]
        For a finite set ${\P}\subset K[U][X]$ ($\overline{K}[X]$),  $\prem{{\P},  {\T}}$
        denotes the set $\{\prem{F,  {\T}}\mid F\in {\P}\}$.

        \textbf{\textsf{Successive Resultant}}.  For two polynomials $F$ and $P$ in $K[U][X]$ ($\overline{K}[X]$) and a variable $x\in X$,  the {\em resultant} \cite{zjzi} of $F$ and $P$ {\it w.r.t.} $x$ is
        denoted by $\res{F,  P,  x}$. Particularly,  $\res{F, P, \mvar{P}}$ is denoted by $\res{F, P}$.
        For a polynomial $F\in K[U][X]$ ($\overline{K}[X]$) and a triangular set ${\T}=\{T_1,  . . . ,  T_r\}$ in $K[U][X]$ ($\overline{K}[X]$),
        the {\em successive resultant} \cite{zjzi} of $F$ {\it w.r.t.} ${\T}$
        is denoted by $\res{F,  {\T}}$,  namely
\[\res{F,  {\T}}=\res{\ldots\res{\res{F,T_r},T_{r-1}},\ldots,T_1}.\]

       \textbf{\textsf{Regular Chain}}.   A triangular set ${\bf T}=\{T_1,  \ldots,  T_r\}$ in
       $K[U][X]$ ($\overline{K}[X]$) is said to be a {\em regular chain} in $K[U][X]$ ($\overline{K}[X]$),   if $\I{{T_1}}\neq 0$ and for each $i$ $(1<i\leq r)$,  $\res{\I{T_i},  \{T_{i-1},  \ldots,  T_1\}}\neq 0$.
       If ${\bf T}$ is a regular chain in $K[U][X]$ ($\overline{K}[X]$) and $\mvar{{\T}}=X$,  $\T$ is a {\em zero-dimensional regular chain}.

       \textbf{\textsf{Regular System}}\footnote{The definition of regular system is the same as that introduced in \cite{changbo} and different from that proposed in \cite{wangi}, see more details in \cite{changbo}. }.\mycite{changbo}  Let ${\T} \subset K[U][X]$($\overline{K}[X]$) be a regular chain and $H \in K[U][X](\overline{K}[X])$. If  ${\rm res}(H,\T) \neq 0$, then $[{\T} ,H]$ is said to be a regular system in $K[U][X]$($\overline{K}[X]$).

\begin{proposition}\mycite{changbo}
       If $[{\T},H]$ is a regular system in $K[U][X]$, then $\V_{\overline{K(U)}}({\T} \backslash H) \neq \emptyset$.
\end{proposition}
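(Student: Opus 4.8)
The plan is to reduce the statement to the construction of a single point of $\overline{K(U)}^{n}$ lying on $\V_{\overline{K(U)}}(\T)$ but off $\V_{\overline{K(U)}}(H)$. Throughout I would work over the algebraically closed (hence infinite) field $\overline{K(U)}$, treating the parameters $u_1,\ldots,u_d$ as elements of $K(U)$, so that $T_1,\ldots,T_r,H$ become honest polynomials in $x_1,\ldots,x_n$. Write $y_i=\mvar{T_i}=x_{\cls{T_i}}$ for the main variables; since $\T$ is a triangular set we have $\cls{T_1}<\cdots<\cls{T_r}$, and each $\I{T_i}$ involves only variables strictly below $y_i$. I would actually prove a slightly stronger statement by induction on $r$: if $\T=\{T_1,\ldots,T_r\}$ is a regular chain and $G$ is any polynomial with $\res{G,\T}\neq 0$, then there is a point $\alpha\in\V_{\overline{K(U)}}(\T)$ with $\I{\T}(\alpha)\neq 0$ and $G(\alpha)\neq 0$. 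Carrying the extra condition $\I{\T}(\alpha)\neq 0$ along is what makes the induction close, and the regular-chain hypotheses are exactly what is needed to maintain it.

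For the inductive step I would set $G'=\res{G,T_r}$ and $\T'=\{T_1,\ldots,T_{r-1}\}$, so that by the very definition of the successive resultant $\res{G,\T}=\res{G',\T'}$, whence $\res{G',\T'}\neq 0$. The two algebraic facts I would invoke are (i) the successive resultant is multiplicative in its first argument, $\res{AB,\T'}=\res{A,\T'}\cdot\res{B,\T'}$, obtained by iterating $\res(AB,P)=\res(A,P)\res(B,P)$; and (ii) the regular-chain condition $\res{\I{T_r},\T'}\neq 0$. Combining them gives $\res{G'\cdot\I{T_r},\T'}=\res{G',\T'}\cdot\res{\I{T_r},\T'}\neq 0$, so the induction hypothesis applied to $\T'$ and $G'\cdot\I{T_r}$ yields a point $\gamma$ at which $T_1,\ldots,T_{r-1}$ vanish, $\I{T_1},\ldots,\I{T_{r-1}}$ are nonzero, $\I{T_r}(\gamma)\neq 0$ and $G'(\gamma)\neq 0$. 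Since none of $T_1,\ldots,T_{r-1}$, $\I{T_r}$, $G'$ involves $y_r$ (the first group has class below $\cls{T_r}$, while $\I{T_r}$ is a coefficient in $y_r$ and $G'$ has had $y_r$ eliminated), I may discard the $y_r$-coordinate of $\gamma$ and re-solve for $y_r$: because $\I{T_r}(\gamma)\neq 0$ the univariate polynomial $T_r(\gamma,y_r)$ has positive degree and hence a root $\beta\in\overline{K(U)}$. Setting $\alpha=(\gamma,\,y_r=\beta)$ produces a point of $\V_{\overline{K(U)}}(\T)$ with $\I{\T}(\alpha)\neq 0$.

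It remains to check $G(\alpha)\neq 0$, and this is the step I expect to be the main obstacle, as it is where one must pass from the vanishing of a symbolic resultant to a statement about roots after specialization. Using the Poisson-type product formula $\res{G,T_r}=\I{T_r}^{\deg_{y_r}G}\prod_{T_r(\rho)=0}G(\rho)$ together with $\I{T_r}(\gamma)\neq 0$ (so that no degree drop occurs and the roots specialize correctly), the nonvanishing of $G'(\gamma)=[\res{G,T_r}](\gamma)$ forces $G(\gamma,\rho)\neq 0$ for every root $\rho$ of $T_r(\gamma,y_r)$, in particular for $\rho=\beta$; thus $G(\alpha)\neq 0$. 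The base case $r=1$ is the same computation without the recursive call: one chooses values of the free variables off the zero set of the nonzero polynomial $\I{T_1}\cdot\res{G,T_1}$ (possible since $\overline{K(U)}$ is infinite), then picks a root of $T_1$ in its main variable. Finally, specializing the strengthened claim to $G=H$ and recalling that $[\T,H]$ being a regular system means precisely $\res{H,\T}\neq 0$ yields a point of $\V_{\overline{K(U)}}(\T)\backslash\V_{\overline{K(U)}}(H)=\V_{\overline{K(U)}}(\T\backslash H)$, which is therefore nonempty.
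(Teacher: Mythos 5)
Your argument is correct, and it is worth noting at the outset that the paper itself offers no proof of this proposition: it is quoted from \cite{changbo}, so the only meaningful comparison is with the proof in that reference. That proof is ideal-theoretic: one shows that the quasi-component $\V_{\overline{K(U)}}({\T}\backslash \I{{\T}})$ is nonempty with Zariski closure $\V_{\overline{K(U)}}(\sat{\T})$, an unmixed variety of dimension $n-r$, and that the hypothesis $\res{H,{\T}}\neq 0$ is equivalent to $H$ being regular (a non-zero-divisor) modulo $\sat{\T}$; hence $\V_{\overline{K(U)}}(H)$ contains no irreducible component of $\V_{\overline{K(U)}}(\sat{\T})$ and cannot swallow the dense subset $\V_{\overline{K(U)}}({\T}\backslash\I{{\T}})$. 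Your route is more elementary and entirely constructive: you strengthen the statement so that the side conditions $\I{{\T}}(\alpha)\neq 0$ and $G(\alpha)\neq 0$ are carried through an induction on the length of the chain, and you close the induction using the multiplicativity of iterated resultants, the regular-chain condition $\res{\I{T_r},\{T_1,\ldots,T_{r-1}\}}\neq 0$, and the specialization/Poisson formula for resultants under the no-degree-drop hypothesis $\I{T_r}(\gamma)\neq 0$. This buys a self-contained proof that produces an explicit witness point and avoids any appeal to unmixedness of saturated ideals; the price is that the two resultant facts you invoke must be stated with the right conventions --- multiplicativity $\res{AB,{\T}'}=\res{A,{\T}'}\cdot\res{B,{\T}'}$ requires $\res{A,P,x}=A^{\deg(P,x)}$ when $\deg(A,x)=0$, and the specialization identity carries the exponent correction $\I{T_r}(\gamma)^{\deg(G,y_r)-\deg(G(\gamma),y_r)}$, which also silently disposes of the degenerate case $G(\gamma)\equiv 0$. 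Both points are standard and your handling of them is sound, so the proof goes through.
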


       \textbf{\textsf{Assignment Homomorphism}}.        For each $a=(a_1,  \ldots,  a_d)\in {\overline{K}}^{d}$,   $\phi
        _{a}:K[U][X]\longrightarrow \overline{K}[X]$ is a homomorphism such  that $\phi_{a}(F)=F(a,  X)$ for all
        $F\in K[U][X]$ and we denote $\phi_{a}(F)$ by $F(a)$. For a non-empty finite set ${\P}\subset K[U][X]$,  ${\P}(a)$ denotes the set $\{F(a)|F\in{\P}\}$ and remark that ${\P}(a)=\emptyset$ if ${\P}=\emptyset$.

         \textbf{\textsf{Characteristic Set And Wu's Method}}.
       An ascending chain ${\bf C}$ in $K[U][X]$  is a {\em characteristic set} of $\P$ in $K[U][X]$ if ${\bf C}\subset \ideal{{\P}}_{K[U][X]}$ and $\prem{{\P}, {\bf C}}=\{0\}$.  Theorem \ref{wellorder} below is the so called {\em well-ordering principle}.

\begin{theorem}\label{wellorder}\mycite{wu}
There exists an algorithm which, for an input non-empty finite subset ${\P}\subset K[U][X]$, outputs either a contradictory ascending chain meaning that ${\V}_{\overline{K(U)}}({\P})=\emptyset$, or a (non-contradictory) characteristic set ${\bf C}=\{C_1,\ldots,C_t\}$ such that
\[{\V}_{\overline{K(U)}}({\P})={\V}_{\overline{K(U)}}({\bf C}\backslash\I{{\bf C}})\cup\cup_{i=1}^t{\V}_{\overline{K(U)}}({\P}\cup {\bf C}\cup \{\I{C_i}\}).\]
\end{theorem}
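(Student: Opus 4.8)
The plan is to reconstruct Wu's well-ordering principle, treating the parameters $U$ as elements of the coefficient ring $K[U]$ and performing elimination only with respect to the variables $X$, so that \emph{class}, \emph{rank} and successive pseudo-division are all taken w.r.t.\ the ordering $x_1\prec\cdots\prec x_n$. First I would fix the \emph{rank ordering} on polynomials of $K[U][X]\setminus K[U]$ induced by comparing $(\cls{F},\deg(F,\mvar{F}))$ and extend it to a partial order on ascending chains in the usual Ritt--Wu fashion (class-$0$ polynomials, i.e.\ nonzero elements of $K[U]$, being of lowest rank and constituting the contradictory chains). The crucial structural fact to establish is that this induced order on ascending chains is \emph{well-founded}: there is no infinite strictly decreasing sequence of ascending chains. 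This is exactly what will force the elimination to halt.

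Next I would spell out the algorithm (the Ritt--Wu process). Setting $\P_0=\P$, at each stage I extract a \emph{basic set} $\mathbf{C}_j$ of $\P_j$, i.e.\ an ascending chain of minimal rank contained in $\P_j$, and compute the successive pseudo remainders $\prem{\P_j\setminus\mathbf{C}_j,\mathbf{C}_j}$. If one of these remainders is a nonzero element of $K[U]$, I stop and return it as a contradictory ascending chain. If all of them are $0$, I return $\mathbf{C}_j$ as the characteristic set $\mathbf{C}$. Otherwise I set $\P_{j+1}=\P_j\cup\{\text{nonzero remainders}\}$ and iterate. Termination follows from the well-foundedness above: each adjoined nonzero remainder is reduced w.r.t.\ $\mathbf{C}_j$, so the basic set of $\P_{j+1}$ has strictly lower rank than $\mathbf{C}_j$, and such a descent cannot continue forever.

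Two invariants then yield correctness of the non-contradictory output. Because I only ever adjoin pseudo remainders of members of $\langle\P\rangle_{K[U][X]}$, and the pseudo-division identity $\I{\mathbf{C}}^{s}F=\sum_i Q_i C_i+\prem{F,\mathbf{C}}$ keeps each remainder inside the ideal generated by the divisors, every $\P_j$ — hence in particular the terminal $\mathbf{C}$ — lies in $\langle\P\rangle_{K[U][X]}$. Since $\P\subseteq\P_j$ monotonically and every member of the terminal set reduces to $0$ (elements of $\mathbf{C}$ reducing to $0$ trivially for an ascending chain), we get $\prem{\P,\mathbf{C}}=\{0\}$, so $\mathbf{C}$ is a characteristic set of $\P$ as defined in the excerpt. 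For the contradictory branch, a nonzero $F\in K[U]\cap\langle\P\rangle_{K[U][X]}$ is a unit in $\overline{K(U)}$ and therefore vanishes nowhere, which forces $\V_{\overline{K(U)}}(\P)=\emptyset$.

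Finally I would derive the displayed zero decomposition from the same identity. For $\supseteq$: on $\V_{\overline{K(U)}}(\mathbf{C}\backslash\I{\mathbf{C}})$ we have $\I{\mathbf{C}}\neq 0$ and $\prem{F,\mathbf{C}}=0$, so $\I{\mathbf{C}}^{s}F=\sum_i Q_i C_i$ forces $F=0$ for every $F\in\P$; and each set $\V_{\overline{K(U)}}(\P\cup\mathbf{C}\cup\{\I{C_i}\})$ is contained in $\V_{\overline{K(U)}}(\P)$ by definition. For $\subseteq$: any zero $p$ of $\P$ satisfies $p\in\V_{\overline{K(U)}}(\mathbf{C})$ because $\mathbf{C}\subset\langle\P\rangle$, and then either $\I{\mathbf{C}}(p)\neq 0$, placing $p$ in the first set, or $\I{C_i}(p)=0$ for some $i$, placing $p$ in the corresponding $\V_{\overline{K(U)}}(\P\cup\mathbf{C}\cup\{\I{C_i}\})$. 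I expect the genuine technical core — and the main obstacle — to be the well-foundedness of the rank ordering on ascending chains, i.e.\ the termination argument; once the pseudo-division identity and that descent principle are in hand, the ideal-membership invariants and the decomposition formula are short.
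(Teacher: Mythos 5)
The paper offers no proof of this statement: it is quoted from Wu \cite{wu} as the classical well-ordering principle, so there is no in-paper argument to compare against. Your reconstruction is the standard Ritt--Wu proof (basic sets, adjunction of nonzero pseudo-remainders, descent in the chain ordering, and the pseudo-division identity $\I{\mathbf{C}}^{s}F=\sum_i Q_iC_i+\prem{F,\mathbf{C}}$ driving both the ideal-membership invariant and the two inclusions of the zero decomposition) and is sound; the one ingredient you defer --- well-foundedness of the rank order on ascending chains --- is correctly identified as the technical core and is Ritt's classical lemma, proved by stabilizing the rank at each position of a putative infinite strictly decreasing sequence and using that an ascending chain has at most $n$ elements.
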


On the base of Theorem \ref{wellorder}, there exists an algorithm, namely Wu's method,  for computing a  finite sequence of ascending chains ${\bf C}_1, {\bf C}_2, \ldots, {\bf C}_m$ $(m\geq 1)$ in $K[U][X]$ such that

(1)${\bf C}_1, {\bf C}_2, \ldots, {\bf C}_m$ is a finite sequence of  characteristic sets  in $K[U][X]$;

(2)If $m=1$,  ${\V}_{\overline{K(U)}}({\P})=\emptyset$. Otherwise,  suppose ${\mathbb S}=\{{\bf C}_i|1\leq i\leq m$ and ${\bf C}_i$ is a non-contradictory ascending chain$\}$, then  ${\V}_{\overline{K(U)}}({\P})=\cup_{{\bf C}\in {\mathbb S}}{\V}_{\overline{K(U)}}({\bf C}\backslash\I{{\bf C}})$.

 The set of ascending chains $\{{\bf C}_1, {\bf C}_2, \ldots, {\bf C}_m\}$ above is said to be a {\em Wu's decomposition} or {\em characteristic set decomposition} of $\P$ in $K[U][X]$.

For any  triangular set $\T$ in $K[U][X]$, we denote $\#(X)-\#(\T)$ by $d({\T},X)$.
 \begin{definition}\label{DEdimension}
 Let $\P$ be a parametric system in $K[U][X]$ and the set $ \{{\bf C}_1, {\bf C}_2, \ldots, {\bf C}_m\}$ of ascending chains be a {\em Wu's decomposition} of $\P$ in $K[U][X]$.   If  $d({\bf C}_i,X)=0$ for every non-contradictory ascending chain ${\bf C}_i$, $\P$ is said to be a {\em generic zero-dimensional system}. Otherwise, $\P$ is said to be a {\em generic positive-dimensional system}.
\end{definition}

\begin{definition}\label{regsys}
Let $\P$ be a parametric system in $K[U][X]$ and $\mathbb{TH}=\{[{\T}_1,H_1],\ldots,[{\T}_s,H_s]\}$ be a set of regular systems  in $K[U][X]$. $\mathbb{TH}$ is said to be a {\em parametric regular system decomposition} of $\P$ in $K[U][X]$, if ${\V}_{\overline{K(U)}}({\P})=\cup_{i=1}^s{\V}_{\overline{K(U)}}({\T}_i\backslash H_i)$.
\end{definition}

\begin{definition}\mycite{changbo}\label{DeRSSwell}
     Let $[{\bf T}, H]$ be a regular system in $K[U][X]$ and $a\in\overline{K}^{d}$.   If ${\bf
     T}(a)$ is a regular chain in $\overline{K}[X]$,
     $\rank{{\bf T}(a)}$$=\rank{{\bf T}}$ and $\res{H(a),{\T}(a)} \neq 0$,
     then we say that the regular system $[{\bf T}, H]$ {\em specializes well} at $a$.
\end{definition}

%\begin{proposition}\label{PRRSSwell}\mycite{changbo}
 %     Let ${\bf T}$ be a regular chain in $K[U][X]$. Then
 %     ${\bf T}$ specializes well at $a$ if and only if $a\in \overline{K}^d\backslash \V^{U}(\res{\I{{\T}},{\T}})$.
%\end{proposition}

\begin{definition}
Let $\P$ be a parametric system in $K[U][X]$ and $\mathbb{TH}=\{[{\T}_1,H_1],\ldots,[{\T}_s,H_s]\}$ be a parametric regular system decomposition of $\P$ in $K[U][X]$.  For any $a \in \overline{K}^d$, if ${\V}({\P}(a))=\cup_{i=1}^s{\V}({\T}_i(a)\backslash H_i(a)))$ and $[{\T}_i,H_i]$($1\leq i \leq s$) specializes well at $a$, then $\mathbb{TH}$ is said to be {\em stable} at $a$.
\end{definition}
Remark that the concept of {\em stable} generic regular (chain) decomposition is first introduced in \cite{tang}.

%Suppose $\mathcal{V}$ is an {\em affine variety} in $\overline{K}^d$.
 % in $\overline{K}^d$.
 %({\bf This sentence should be put at the beginning part of this section})
 %Please see the precise definition of dimension of affine variety in \cite{cox}.

%{\bf (Definition 5 is re-written, please check)}
\begin{definition}\label{DEsus}
      Let $\mathbb{ TH}$ be a  parametric regular system decomposition of a given parametric system $\P$ in $K[U][X]$. If there is an affine variety $\mathcal{V}$ in $\overline {K}^{d}$ with $\dim(\mathcal{V})<d$ such that $\mathbb{ TH}$ is stable at any $a\in \overline{K}^d\backslash \mathcal{V}$, then $\mathbb{TH}$ is said to be a {\em generic regular system decomposition} of $\P$ and $\mathcal{V}$ is said to be a {\em regular-decomposition-unstable variety (RDU)} of $\P$ {\it w.r.t.} $\mathbb{TH}$.
\end{definition}

        %$\Pi_{U}$ denotes the canonical projection: $\overline{K}^{d+n}\rightarrow \overline{K}^{d}$.

%\noindent{\bf Zero sets}

%Actually, RDU are closely bound up with the concept of border polynomial (BP) in \cite{iss,isss,yang,xia}.  More precisely, if $V$ is a RDU of $\P$ {\it w.r.t.} some regular decomposition,  then there exists a related border polynomial $B$ of $\P$ such that $V$ is contained in the hypersurface generated by $B$. A clearer characterization of the relationship between {\em BP} and RDU is interesting but not what we concern here.
%As a matter of fact, the main goal of this paper is to compute a regular system decomposition of a given system and the associated RDU simultaneously.

%In this section,  we introduce weakly relatively simplicial decomposition in zero-dimensional case,  which is a weaker concept comparing to relatively simplicial decomposition proposed in \cite{zjzi}.

\begin{definition}\label{Dewrsd}
Let $\T$ be a zero-dimensional regular chain in $K[U][X]$ and $P\in K[U][X]$. Suppose  $\mathbb{H}$ and $\mathbb{G}$ are two finite sets of zero-dimensional regular chains in $K[U][X]$.  If

$(1)$ $\V_{\overline{K(U)}}({\T}\cup \{P\})=\cup_{{\bf H}\in {\mathbb H}}{\V}_{\overline{K(U)}}({\bf H})$,

$(2)$ $\V_{\overline{K(U)}}({\T}\backslash P)=\cup_{{\bf G}\in {\mathbb G}}{\V}_{\overline{K(U)}}({\bf G})$,\\
then $(\mathbb{H}, \mathbb{G})$ is said to be a {\em weakly relatively simplicial decomposition} of $\T$ {\it w.r.t.} $P$ in $K[U][X]$.
\end{definition}

In \cite{tang}, we gave an algorithm for computing weakly relatively simplicial decompositions. Here, we only present its specification but omit the details.

 \begin{algorithm}\label{ALPRSD}
\DontPrintSemicolon
\SetAlgoCaptionSeparator{. }
 \caption{\APRSD}
    \KwIn{A zero-dimensional regular chain ${\bf T}=\{T_1, \ldots, T_n\}$ in $K[U][X]$,   a polynomial $P\in K[U][X]$,  variables $X=\{x_1, \ldots, x_n\}$ }
    \KwOut{[${\mathbb H}$, ${\mathbb G}$, $B$],  where\\

       (1)$(\mathbb{H}, \mathbb{G})$ is a weakly relatively simplicial decomposition of $\T$ {\it w.r.t.} $P$ in $K[U][X]$;

       (2)$B$ is a polynomial in $K[U]$ such that for any $a\in \overline{K}^d\backslash {\V}^{U}(B)$, the weakly relatively simplicial decomposition $(\mathbb{H}, \mathbb{G})$ of $\T$ {\it w.r.t.} $P$ is stable\footnotemark\  at $a$.}
\end{algorithm}
\footnotetext{Please see the definition of {\em stable} in \cite{tang}.}

\section{Theory and Algorithm}\label{sectionus}

We first give some notations. Assume that ${\tt Alg}$ is a name of an algorithm and $p_1, \ldots, p_t$ is a sequence of inputs of this algorithm. If the output of ${\tt Alg}(p_1, \ldots, p_t)$ is a finite sequence $q_1, \ldots, q_s$,  $q_i$ is denoted by ${\tt Alg}(p_1, \ldots, p_t)_i$ for any $i$ $(1\leq i\leq s)$ and also said to be the $i$th output of ${\tt Alg}(p_1, \ldots, p_t)$. Given a finite set $S=\{s_1, \ldots, s_t\}$ and a map $\phi$ on $S$,  ${\tt op}(S)$ denotes the finite sequence $s_1, \ldots, s_t$  and ${\tt map}(s\rightarrow \phi(s),  S)$ denotes the set $\phi(S)=\{ \phi(s) | s \in S\}$.

\subsection{Wu's Decomposition Under Specification}\label{subsectionwu}
The general idea of Wu's method is presented in Section \ref{sfuhaoshuoming}. %Now we start a lemma with the notation introduced in Section \ref{sfuhaoshuoming}.
Some results on Wu's decomposition under specification are given in this section.

\begin{definition}\label{line}
Let ${\P}_1$ be a parametric system in $K[U][X]$ and $\mathbb{S}=\{{\bf C}_1,\ldots,{\bf C}_m\}$ be a Wu's decomposition of ${\P}_1$ in $K[U][X]$. Suppose $\mathcal{L}=\{{\bf C}_{l,1},{\bf C}_{l,2},\ldots,{\bf C}_{l,k} \}$ is a subset of $\mathbb{S}$ satisfying that

 (1) ${\bf C}_{l,1}$ is a characteristic set of ${\P}_1$.

 (2) If $k \geq 2$, ${\bf C}_{l,i}$ $(2\leq i \leq k)$ is a characteristic set of ${\P}_{i}={\P}_{i-1} \cup {\bf C}_{l,i-1} \cup \{\I{C_{l,i-1}}\}$ where $C_{l,i-1} \in {\bf C}_{l,i-1}$.

 (3) If $k =1$, ${\bf C}_{l,1}$ is a contradictory ascending chain. Otherwise, ${\bf C}_{l,i}$ $(1\leq i \leq k-1)$ is a non-contradictory ascending chain and ${\bf C}_{l,k}$ is a contradictory ascending chain.

 Then $\mathcal{L}$ is said to be a {\em line} of $\mathbb{S}$ and $\P^{\mathcal L}=\{ \P_1,\ldots, \P_k \}$ the corresponding systems.
\end{definition}
%Definition \ref{line} is used in Lemma \ref{wusinit} and Corollary \ref{wuonepoly}.

\begin{lemma}\label{wusinit}
 Let ${\P}_1$ be a parametric system in $K[U][X]$ and $\mathbb{S}=\{{\bf C}_1,\ldots,{\bf C}_m\}$ be a Wu's decomposition of ${\P}_1$ in $K[U][X]$. Let $\mathcal{L}=\{{\bf C}_{l,1},\ldots,{\bf C}_{l,k} \}$ be a line of $\mathbb{S}$ with corresponding systems $\{ {\P}_1, \ldots, {\P}_k \}$. Then for any $a \in \overline{K}^d\backslash {\V}^U({\bf C}_{l,k})$,  there exists a polynomial $p_i \in {\P}_i$ such that $p_i(a) \not\equiv 0$ for any $i$ $(1\leq i \leq k)$.
\end{lemma}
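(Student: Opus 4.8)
The plan is to argue by contradiction, tracing the ascending inclusion $\P_1\subseteq\P_2\subseteq\cdots\subseteq\P_k$ up to its contradictory tip. Fix $a\in\overline{K}^d\backslash{\V}^U({\bf C}_{l,k})$. Since ${\bf C}_{l,k}$ is a contradictory ascending chain, it is a singleton $\{F\}$ with $F\in K[U]\backslash\{0\}$, and the hypothesis $a\notin{\V}^U({\bf C}_{l,k})$ says precisely that $F(a)\neq 0$. I would recast the goal as follows: it suffices to show that for no index $i$ does every polynomial of $\P_i$ vanish at $a$ after the substitution $\phi_a$, i.e. that $\P_i(a)\not\subseteq\{0\}$ as a subset of $\overline{K}[X]$.

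The one step I would treat with care concerns initials. If $C\in K[U][X]$ has positive class with $\mvar{C}=x_p$, then by definition $\I{C}\in K[U][X\backslash\{x_p\}]$ is exactly the coefficient of the monomial $x_p^{\deg(C,x_p)}$ in $C$. Applying $\phi_a$ only touches the parameters, so $\I{C}(a)$ is the coefficient of that same monomial in $C(a)$. Hence if $C(a)\equiv 0$ in $\overline{K}[X]$, every monomial coefficient of $C(a)$ vanishes, and in particular $\I{C}(a)\equiv 0$. In short, $C(a)\equiv 0$ forces $\I{C}(a)\equiv 0$.

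The engine of the proof is a propagation claim: if every $p\in\P_i$ satisfies $p(a)\equiv 0$, then every $p\in\P_{i+1}$ satisfies $p(a)\equiv 0$. To see this, recall $\P_{i+1}=\P_i\cup{\bf C}_{l,i}\cup\{\I{C_{l,i}}\}$. The members of $\P_i$ vanish at $a$ by hypothesis. Since ${\bf C}_{l,i}$ is a characteristic set of $\P_i$, we have ${\bf C}_{l,i}\subset\ideal{{\P}_i}_{K[U][X]}$, so each $C\in{\bf C}_{l,i}$ can be written $C=\sum_j Q_jp_j$ with $p_j\in\P_i$ and $Q_j\in K[U][X]$; applying $\phi_a$ and using $p_j(a)\equiv 0$ gives $C(a)\equiv 0$. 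In particular $C_{l,i}(a)\equiv 0$, whence the initial observation of the previous paragraph yields $\I{C_{l,i}}(a)\equiv 0$. Thus all generators of $\P_{i+1}$ vanish at $a$, proving the claim.

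To finish, suppose for contradiction that for some $i_0$ with $1\le i_0\le k$ every polynomial of $\P_{i_0}$ vanishes at $a$. Iterating the propagation claim from $i_0$ up to $k$ shows that every polynomial of $\P_k$ vanishes at $a$. But ${\bf C}_{l,k}=\{F\}$ is a characteristic set of $\P_k$, so $F\in\ideal{{\P}_k}_{K[U][X]}$; writing $F=\sum_j Q_jp_j$ with $p_j\in\P_k$ and applying $\phi_a$ gives $F(a)=\sum_j Q_j(a)p_j(a)=0$, contradicting $F(a)\neq 0$. Therefore each $\P_i$ contains some $p_i$ with $p_i(a)\not\equiv 0$, as claimed. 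I expect the only genuinely delicate point to be the initial observation of the second paragraph; once it is in place, the upward induction along the line and the final appeal to $F\in\ideal{{\P}_k}_{K[U][X]}$ are routine.
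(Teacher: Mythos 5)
Your proof is correct and takes essentially the same route as the paper's: the paper runs an induction on $k$, tracing a non-vanishing witness from ${\P}_{i+1}$ back into ${\P}_i$ using exactly the two facts you isolate (elements of a characteristic set lie in $\ideal{{\P}_i}_{K[U][X]}$, and the initial is a coefficient of its polynomial, hence vanishes under $\phi_a$ whenever the polynomial does). Your propagation claim is just the contrapositive of the paper's inductive step, and the ``delicate point'' you flag about initials is the same observation the paper uses in the reverse direction, namely that $\I{C_{l,1}}(a)\not\equiv 0$ forces $C_{l,1}(a)\not\equiv 0$.
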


\begin{proof}
We prove it by induction on the number k of elements of $\mathcal{L}$. If k=1, it means $\mathcal{L}$ contains only one element. Since ${\bf C}_{l,1}$ is a contradictory ascending chain, we can assume that ${\bf C}_{l,1}=\{C_{l,1}\}$ where $C_{l,1} \in K[U]$ and  we know that $C_{l,1} \in \ideal{{\P}_1}$. Suppose ${\P}_1=\{f_{1,1},\ldots,f_{1,t_1}\}$. Then $C_{l,1}$ can be written as $C_{l,1}= \sum_{j=1}^{t_1} h_j f_{1,j}$ where $h_j\in K[U][X]$ for any $j$ $(1\leq j\leq t_1)$. Thus $C_{l,1}(a)= \sum_{j=1}^{t_1} h_j(a) f_{1,j}(a)$ holds. Since $C_{l,1}(a) \not\equiv 0$, there must exist some $f_{1,e_1}\in {\P}_1$ such that $f_{1,e_1}(a) \not\equiv 0$. Let $p_1=f_{1,e_1}$ and we are done.

Now we assume that the conclusion holds when $k<N$ ($N>1$). Suppose $k=N$. $\mathcal{L}_2=\{{\bf C}_{l,2},\ldots,{\bf C}_{l,k} \} \subsetneq \mathcal{L}$ is a line of  Wu's decomposition of ${\P}_2$. According to the induction hypothesis, for any $a \in \overline{K}^d\backslash {\V}^U({\bf C}_{l,k})$,  there exists a polynomial $p_i \in {\P}_i$ such that $p_i(a) \not\equiv 0$ for any $i$ $(2\leq i \leq k)$.
As is known to us, ${\P}_{2}={\P}_{1} \cup {\bf C}_{l,1} \cup \{\I{C_{l,1}}\}$, if $p_2\in {\P}_{1}$, let $p_{1}=p_2$ and the conclusion holds obviously. Otherwise,
 $p_{2} \in \{\I{C_{l,1}}\} \cup {\bf C}_{l,1}$.  If  $p_2 \in {\bf C}_{l,1}\subset \ideal{{\P}_{1}}$, supposing ${\P}_{1}=\{f_{1,1},\ldots,f_{1,t_{1}}\}$, then $p_{2}$ can be written as $p_{2}= \sum_{j=1}^{t_{1}} h_j f_{1,j}$ where $h_j \in K[U][X]$ for any $j$ $(1\leq j\leq t_{1})$. Thus $p_2(a)= \sum_{j=1}^{t_{1}} h_j(a) f_{1,j}(a)$ holds. Since $p_2(a) \not\equiv 0$, there must exist some $f_{1,e_{1}}\in {\P}_{1}$ such that $f_{1,e_{1}}(a) \not\equiv 0$. Let $p_{1}=f_{1,e_{1}}$ and we are done. If $p_{2}= \{\I{C_{l,1}}\}$,  it implies  $C_{l,1}(a)\not\equiv 0$. Since $C_{l,1}$ can be written as $C_{l,1}= \sum_{j=1}^{t_{1}} g_j f_{1,j}$ where $g_j\in K[U][X]$ for any $j$ $(1\leq j\leq t_{1})$. Thus there must exist some $f_{1,e_{1}}\in {\P}_{1}$  such that $f_{1,e_{1}}(a) \not\equiv 0$. Let $p_{1}=f_{1,e_{1}}$ and  the conclusion holds.

\end{proof}

With Lemma \ref{wusinit}, Corollary \ref{wuonepoly} holds obviously.

\begin{corollary}\label{wuonepoly}
Let ${\P}$ be a parametric system in  $K[U][X]$ and $\mathbb{S}=\{{\bf C}_1,\ldots,{\bf C}_m\}$ be a Wu's decomposition of ${\P}$ in $K[U][X]$. Let $\mathcal{L}=\{{\bf C}_{l,1},{\bf C}_{l,2},\ldots,{\bf C}_{l,k} \}$ be a line of $\mathbb{S}$.  If ${\P}=\{p\}$, containing only one polynomial in K[U][X], $p(a) \not\equiv 0$ for any $a \in \overline{K}^d\backslash {\V}^U({\bf C}_{l,k})$.
\end{corollary}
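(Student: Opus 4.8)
The plan is to obtain the corollary as nothing more than the special case $i=1$ of Lemma \ref{wusinit}. First I would set $\P_1=\P=\{p\}$ and let $\{\P_1,\ldots,\P_k\}$ be the corresponding systems of the line $\mathcal{L}$, so that by construction the starting system of the line is exactly the given singleton $\P=\{p\}$. Lemma \ref{wusinit} then applies verbatim and guarantees that for every $a\in\overline{K}^d\backslash\V^U({\bf C}_{l,k})$ there exists a polynomial $p_1\in\P_1$ with $p_1(a)\not\equiv 0$. Since $\P_1$ is the singleton $\{p\}$, the only possible choice for $p_1$ is $p$ itself, whence $p(a)\not\equiv 0$, which is precisely the asserted conclusion.

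The only point that needs a moment's attention is to confirm that the first member $\P_1$ of the corresponding systems really is the input system $\P$ rather than some later member $\P_i$; but this is immediate from Definition \ref{line}, where $\P_1$ is by definition the system of which ${\bf C}_{l,1}$ is a characteristic set, i.e.\ the starting system of the line. Consequently I do not expect any genuine obstacle here: the substantive induction on the length $k$ of the line has already been carried out inside the proof of Lemma \ref{wusinit}, and the corollary is a one-line instantiation obtained by reading off the $i=1$ conclusion under the extra hypothesis $\#(\P)=1$. This is exactly why the statement can be said to ``hold obviously'' once Lemma \ref{wusinit} is in hand.
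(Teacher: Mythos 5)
Your proposal is correct and matches the paper's intent exactly: the paper dispenses with the proof by remarking that the corollary ``holds obviously'' from Lemma \ref{wusinit}, and your argument is precisely that instantiation at $i=1$ with $\P_1=\P=\{p\}$ forcing $p_1=p$. Nothing further is needed.
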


%Corollary \ref{wuonepoly} is a natural result of Lemma \ref{wusinit}, which means ${\V}^U(p) \subset {\V}^U(C_m)$.

\begin{lemma}\label{wuset}
Suppose ${\bf C}=\{C_1,\ldots,C_t\}$ is a non-contradictory ascending chain and a characteristic set of parametric system ${\P}$ in $K[U][X]$. For any $a \in \overline{K}^d$, ${\V}({\P}(a)) = {\V}({\bf C}(a) \backslash \I{{\bf C}}(a))   \cup \cup_{i=1}^t{\V}(( {\P}(a)  \cup {\bf C}(a) \cup\{ \I{C_i}(a) \}))$.
\end{lemma}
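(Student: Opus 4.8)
The plan is to re-run the argument behind the well-ordering principle (Theorem \ref{wellorder}) over $\overline{K}[X]$ instead of over $\overline{K(U)}^{n}$, exploiting that the two defining properties of a characteristic set are polynomial relations which survive the specialization homomorphism $\phi_{a}$. Since ${\bf C}$ is a characteristic set of ${\P}$ we have (i) ${\bf C}\subset\ideal{{\P}}_{K[U][X]}$ and (ii) $\prem{{\P},{\bf C}}=\{0\}$. From (i) each $C_{i}=\sum_{j}g_{ij}P_{j}$ with $g_{ij}\in K[U][X]$ and the $P_{j}$ ranging over ${\P}$; from (ii), combining the successive pseudo-remainder identities step by step and clearing initials, for every $P\in{\P}$ there is an integer $s_{P}$ and polynomials $Q_{i}\in K[U][X]$ with $\I{{\bf C}}^{s_{P}}P=\sum_{i}Q_{i}C_{i}$ in $K[U][X]$. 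Applying the ring homomorphism $\phi_{a}$ turns these into the identities $C_{i}(a)=\sum_{j}g_{ij}(a)P_{j}(a)$ and $\I{{\bf C}}(a)^{s_{P}}P(a)=\sum_{i}Q_{i}(a)C_{i}(a)$ in $\overline{K}[X]$; here $\I{{\bf C}}(a)=\prod_{i}\I{C_{i}}(a)$ because $\phi_{a}$ is multiplicative.

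For the inclusion $\supseteq$ I would treat the two kinds of pieces separately. A point $x$ lying in some ${\V}({\P}(a)\cup{\bf C}(a)\cup\{\I{C_{i}}(a)\})$ annihilates ${\P}(a)$, hence lies in ${\V}({\P}(a))$. A point $x\in{\V}({\bf C}(a)\backslash\I{{\bf C}}(a))$ satisfies $C_{i}(a)(x)=0$ for all $i$ and $\I{{\bf C}}(a)(x)\neq0$; substituting into the specialized identity gives $\I{{\bf C}}(a)(x)^{s_{P}}P(a)(x)=0$, and since the nonzero field element $\I{{\bf C}}(a)(x)$ is invertible we get $P(a)(x)=0$ for every $P\in{\P}$, i.e. $x\in{\V}({\P}(a))$.

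For the inclusion $\subseteq$ take $x\in{\V}({\P}(a))$. The specialized relations $C_{i}(a)=\sum_{j}g_{ij}(a)P_{j}(a)$ force $C_{i}(a)(x)=0$ for all $i$, so $x\in{\V}({\bf C}(a))$. I then split on the value of $\I{{\bf C}}(a)(x)$: if it is nonzero then $x\in{\V}({\bf C}(a))\backslash{\V}(\I{{\bf C}}(a))={\V}({\bf C}(a)\backslash\I{{\bf C}}(a))$; if it is zero then, as $\I{{\bf C}}(a)=\prod_{i}\I{C_{i}}(a)$, some factor satisfies $\I{C_{i}}(a)(x)=0$, whence $x\in{\V}({\P}(a)\cup{\bf C}(a)\cup\{\I{C_{i}}(a)\})$. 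Either way $x$ lies in the right-hand side, which closes the equality.

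The one point needing care, and the reason this lemma is not a mere restatement of Theorem \ref{wellorder}, is that Theorem \ref{wellorder} lives over $\overline{K(U)}^{n}$ whereas the varieties here live in $\overline{K}^{n}$, so one cannot simply ``substitute $a$''; moreover specialization can degrade the combinatorial structure of ${\bf C}$ (initials may vanish, ranks may drop, and ${\bf C}(a)$ need not remain an ascending chain). I arrange the proof to avoid using any such structure, relying only on the two ideal-membership identities, which $\phi_{a}$ preserves, together with the case split on whether $\I{{\bf C}}(a)$ vanishes at the point. In particular $\I{{\bf C}}(a)$ must be read as the specialization of the generic product of initials and not as the initial of the chain ${\bf C}(a)$; it is the former that appears throughout, and its multiplicativity under $\phi_{a}$ is exactly what makes the case split go through cleanly.
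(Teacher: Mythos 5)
Your proposal is correct and follows essentially the same route as the paper: both proofs rest on the two ideal-membership identities ${\bf C}\subset\ideal{{\P}}_{K[U][X]}$ and $\I{C_1}^{k_1}\cdots\I{C_t}^{k_t}P=\sum_i Q_iC_i$ (from $\prem{{\P},{\bf C}}=\{0\}$), push them through $\phi_a$, and then split on whether the specialized product of initials vanishes. The only cosmetic difference is that the paper phrases part of the case analysis at the level of set identities (first disposing of $\I{{\bf C}}(a)\equiv 0$, then writing ${\V}({\P}(a))={\V}({\bf C}(a)\backslash\I{{\bf C}}(a))\cup{\V}({\P}(a)\cup\{\I{{\bf C}}(a)\})$) while you argue pointwise, which changes nothing of substance.
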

\begin{proof}
For any $a \in \overline{K}^d$, if $\I{{\bf C}}(a)\equiv0$, the conclusion holds obviously since $\I{{\bf C}}(a)= \prod_{i=1}^t \I{C_i}(a)$. Now we prove the conclusion when $\I{{\bf C}}(a)\not\equiv 0$. According to the definition of characteristic set, we know that ${\bf C}\subset \ideal{{\P}}$ and for any $p \in {\P}$,
${\I{C_1}}^{k_1}\cdots {\I{C_t}}^{k_t}p=q_1 C_{1} +\cdots +q_t C_{t}$ where $q_i\in K[U][X]$ for any $i$ $(1\leq i \leq t)$. Then ${\bf C}(a) \subset \ideal{{\P}(a)}$ and ${\I{C_1}}(a)^{k_1}\cdots {\I{C_t}}(a)^{k_t}p(a)=q_1(a)C_1(a) +\cdots +q_t(a) C_t(a)$. Therefore
${\V}({\bf C}(a) \backslash \I{{\bf C}}(a))  \subset {\V}({\P}(a) ) \subset {\V}({\bf C}(a))$
and ${\V}({\P}(a)) = {\V}({\bf C}(a) \backslash \I{{\bf C}}(a)) \cup  {\V}({\P}(a)\cup\{\I{{\bf C}}(a)\})$.
Since ${\bf C}(a) \subset \ideal{{\P}(a)}$, ${\V}({\P}(a)\cup \{\I{{\bf C}}(a)\})={\V}({\P}(a)\cup {\bf C}(a) \cup \I{{\bf C}}(a))=\cup_{i=1}^t{\V}(( {\P}(a)  \cup {\bf C}(a) \cup\{ \I{C_i}(a) \}))$. We are done.%Because $\I{{\bf C}}(a)= \prod_{i=1}^t \I{C_i}(a)$, %${\V}(\I{{\bf C}}(a))= \cup_{i=1}^t {\V}(\I{C_i}(a))$ holds. Then we can get
%We get that
%\[{\V}({\P}(a)) = {\V}({\bf C}(a) \backslash \I{{\bf C}}(a))   \cup \cup_{i=1}^t{\V}(( {\P}(a)  \cup {\bf C}(a) \cup\{ \I{C_i}(a) \})).\]
\end{proof}

According to  Theorem \ref{wellorder} and Lemma \ref{wuset}, we can get Corollary \ref{wudecom} easily.

\begin{corollary}\label{wudecom}
  Let ${\P}$ be a parametric system in  $K[U][X]$ and  $\{{\bf C}_1,\ldots,{\bf C}_m\}$ be  a Wu's decomposition of ${\P}$ in $K[U][X]$. Suppose $\mathbb{S}= \{ {\bf C}_i|1\leq i \leq m$ and ${\bf C}_i$ is a non-contradictory ascending chain$\}$ and $\mathbb{CS}= \{ {\bf C}_i|1\leq i \leq m$ and ${\bf C}_i$ is a contradictory ascending chain$\}$.
  Then for any $a \in \overline{K}^d\backslash(\cup_{{\bf CS} \in \mathbb{CS}} {\V}^U({\bf CS}))$, ${\V}({\P}(a)) =\cup_{{\bf C} \in \mathbb{S}} {\V}({\bf C}(a) \backslash \I{{\bf C}}(a))$.
\end{corollary}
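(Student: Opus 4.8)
The plan is to read off a Wu's decomposition as a rooted tree and prove the identity by structural induction on subtrees, so that the recursion mirrors the way Wu's method is built. Concretely, Wu's method starts from $\P$ as the root and, at a node whose system is $\P'$, computes a characteristic set ${\bf C}'$; if ${\bf C}'$ is non-contradictory then by Theorem~\ref{wellorder} the node has one child per element $C_i\in{\bf C}'$, namely the system $\P'\cup{\bf C}'\cup\{\I{C_i}\}$, while if ${\bf C}'$ is contradictory the node is a leaf. Thus the non-contradictory nodes are exactly the members of $\mathbb{S}$, the leaves are exactly the members of $\mathbb{CS}$, the root-to-leaf paths are exactly the lines of Definition~\ref{line}, and at each non-contradictory node ${\bf C}'$ is a characteristic set of its own system $\P'$ --- precisely the hypothesis needed to invoke Lemma~\ref{wuset} there. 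For a node $\P'$ write $\mathbb{S}(\P')$ and $\mathbb{CS}(\P')$ for the non-contradictory nodes and the leaves of the subtree rooted at $\P'$.

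I would prove the following localized statement by induction on the size of the subtree: for every node $\P'$ and every $a\in\overline{K}^d$ with $a\notin\cup_{{\bf CS}\in\mathbb{CS}(\P')}\V^{U}({\bf CS})$, one has $\V(\P'(a))=\cup_{{\bf C}\in\mathbb{S}(\P')}\V({\bf C}(a)\backslash\I{{\bf C}}(a))$. Taking $\P'=\P$ recovers the corollary, since then $\mathbb{S}(\P')=\mathbb{S}$ and $\mathbb{CS}(\P')=\mathbb{CS}$. The base case is a leaf, where ${\bf C}'=\{C\}$ with $C\in K[U]\backslash\{0\}$ lying in $\ideal{\P'}$ (as in the proof of Lemma~\ref{wusinit}), and $\mathbb{S}(\P')=\emptyset$. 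For $a\notin\V^{U}(C)$ the specialization $C(a)\in\overline{K}$ is a nonzero constant; writing $C=\sum_j h_j f_j$ with $f_j\in\P'$ shows $C(a)=\sum_j h_j(a)f_j(a)\in\ideal{\P'(a)}$, so $\ideal{\P'(a)}=\overline{K}[X]$ and $\V(\P'(a))=\emptyset$, matching the empty union on the right.

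For the inductive step, take a non-contradictory node with characteristic set ${\bf C}'=\{C_1,\ldots,C_t\}$ and children $\P'_i=\P'\cup{\bf C}'\cup\{\I{C_i}\}$. Lemma~\ref{wuset} applied at this node gives the equality $\V(\P'(a))=\V({\bf C}'(a)\backslash\I{{\bf C}'}(a))\cup\bigcup_{i=1}^t\V(\P'_i(a))$, valid for every $a$. Each child subtree is strictly smaller and its leaves lie in $\mathbb{CS}(\P')$, so the induction hypothesis applies to $\P'_i$ under the same restriction on $a$ and rewrites $\V(\P'_i(a))$ as $\cup_{{\bf C}\in\mathbb{S}(\P'_i)}\V({\bf C}(a)\backslash\I{{\bf C}}(a))$. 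Since $\mathbb{S}(\P')=\{{\bf C}'\}\cup\bigcup_i\mathbb{S}(\P'_i)$, substituting these into the displayed equality finishes the step and the induction.

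The step demanding the most care, and the one that dictates the exact form of the specialization condition, is the base case and its coupling to the recursion: a branch of the decomposition that ends in a contradiction must collapse to the empty set after specialization, and this happens for $a$ precisely when $C(a)\neq0$, i.e. when $a$ avoids $\V^{U}({\bf CS})$. Requiring $a$ to lie outside every $\V^{U}({\bf CS})$ with ${\bf CS}\in\mathbb{CS}$ guarantees this simultaneously at all leaves, which is exactly what the induction needs; note also that the per-element branching in Lemma~\ref{wuset} matches the tree's children in Definition~\ref{line}, so the combinatorics line up cleanly. Equivalently, one may phrase the argument without induction by tracing a point $P^{\ast}\in\V(\P(a))$ downward through Lemma~\ref{wuset}: at each non-contradictory node $P^{\ast}$ either lies in that node's quasi-component $\V({\bf C}'(a)\backslash\I{{\bf C}'}(a))$ or passes to a child, and it can never reach a leaf because every leaf system specializes to the empty variety; the reverse inclusion is then immediate from Lemma~\ref{wuset} together with $\P\subseteq\P'$ for every node, which forces $\V(\P'(a))\subseteq\V(\P(a))$. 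Everything else is routine bookkeeping, so I expect no further obstacle.
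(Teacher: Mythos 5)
Your proof is correct and is exactly the argument the paper has in mind: the paper states Corollary~\ref{wudecom} follows ``easily'' from Theorem~\ref{wellorder} and Lemma~\ref{wuset} without giving details, and your structural induction over the tree of Wu's method --- applying Lemma~\ref{wuset} at each non-contradictory node and using $a\notin\V^{U}({\bf CS})$ to collapse each contradictory leaf to the empty variety --- is precisely the omitted bookkeeping. No gaps.
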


%\begin{remark}
%In practical Wu's Method, $\I{{\bf C}_i}$ is factored and added one by one and the conclusion also holds. $\I{t} $ $(1\leq t \leq k) $ is the factor of $ \I{{\bf C}_i}$ and $\I{t} \notin K[U]$. With notation above, ${\P}\subset K[U][X]$ is a set of polynomials and ${\bf C}_i$ is the  characteristic set of ${\P}_i$ and not a contradictory ascending chain. $\I{t}$ $(1\leq t \leq k)$ is the factor of $ \I{{\bf C}_i}$ and $\I{t} \notin K[U]$. From Then for any $a \in \overline{K}^d$, ${\V}({\P}_i(a)) = {\V}({\bf C}_i(a) \backslash \I{{\bf C}_i}(a)) \cup (\cup_{t=1}^k {\V}({\P}(a)\cup \{\I{t}(a)\}))$.

%If $ \I{{\bf C}_i}(a) \equiv 0 $, then there exists a $\I{t}$ satisfying $\I{t}(a) \equiv 0$. Then the conclusion holds, since there exists i satisfying $ {\V}({\P}(a)\cup \{\I{t}(a)\})={\V} ({\P})$ and ${\V}({\bf C}_i(a) \backslash \I{{\bf C}_i}(a))=\emptyset$. Otherwise, ${\V}(\I{{\bf C}_i})= \cup_{t=1}^k {\V}(\I{t})$, and ${\V}(\I{{\bf C}_i}(a))= \cup_{t=1}^k {\V}(\I{t}(a))$ holds obviously. Then the conclusion holds.
%\end{remark}

\subsection{Converting To Regular Systems}

  Let ${\T}$ be a triangular set in $K[U][X]$. We can compute a set of finite sequence of regular systems $\mathbb{TH}=\{[{\bf T}_1,H_1],\ldots,[{\bf T}_s,H_s]\}$ in $K[U][X]$ and a polynomial $B\in K[U]$ on the basis of Algorithm \ref{ALPRSD} such that ${\V}_{\overline{K(U)}}({\T}\backslash \I{\T})=\cup_{i=1}^s{\V}_{\overline{K(U)}}({\bf T}_i \backslash H_i)$ and for any $a\in \overline{K}^d\backslash {\V}^{U}(B)$, $\mathbb{TH}$ specializes well at $a$ and ${\V}({\T}(a)\backslash \I{\T}(a))=\cup_{i=1}^s{\V}({\bf T}_i(a) \backslash H_i(a))$. The algorithm is presented as Algorithm \ref{ALpdtors}, which plays a key role in Algorithm \ref{ALsus} proposed in the next section.

  Algorithm \ref{ALzdtorc} below was proposed in \cite{tang} for zero-dimensional case. We just give its specification here.  Our focus in this paper is how to deal with the case where the triangular set is positive-dimensional. So, we need to convert a triangular set with $\mvar {\T} \subsetneq {\bf X}$ to a set of regular systems. %It will be presented below.

\begin{algorithm}\label{ALzdtorc}
\SetAlgoCaptionSeparator{. }
\caption{\tt{ZDToRC}}
\DontPrintSemicolon
\KwIn{A triangular set ${\bf T}$ in $K[U][X]$ with $\mvar{{\T}}=X$, variables $X=\{x_1, \ldots, x_n\}$. }
\KwOut{[$\mathbb{G}$, $B$], where\\
             (1)$\mathbb{G}$  is a  finite set  of zero-dimensional regular chains in $K[U][X]$ such that  ${\V}_{\overline{K(U)}}({\bf T}\backslash \I{{\T}})=\cup_{{\bf G}\in \mathbb{G}}{\V}_{\overline{K(U)}}({\bf G})$;\\
            (2)$B$ is a polynomial in $K[U]$ such that for any $a\in \overline{K}^{d}\backslash {\V}^{U}(B)$, ${\V}({\T}(a)\backslash \I{{\T}}(a))=\cup _{{\bf G}\in \mathbb{G}}{\V}({\bf G}(a))$ and ${\bf G}$ specializes well at $a$ for any ${\bf G}\in \mathbb{G}$.}% if $\mathbb{G}\neq \emptyset$.}
\end{algorithm}

According to Algorithm \ref{ALzdtorc}, the following Proposition \ref{pro2} is clear.

  \begin{proposition}\label{pro2}
  Let ${\bf T}$ be a triangular set in $K[U][X]$ and ${\tt ZDToRC}({\T},\mvar{\T})=[\mathbb{G}, B]$. If $\mathbb{G} \neq \emptyset$, $\I{\T}(a) \neq 0$ for $a \notin {\V}^U (B)$.
  \end{proposition}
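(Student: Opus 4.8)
The plan is to argue by contradiction, combining the two properties that the specification of Algorithm \ref{ALzdtorc} guarantees for parameter values outside ${\V}^U(B)$ with the standard fact that a regular chain over an algebraically closed field has zeros. Fix $a \in \overline{K}^d \setminus {\V}^U(B)$ and suppose, towards a contradiction, that $\I{\T}(a) = 0$ as a polynomial in $\overline{K}[X]$. At such an $a$ the output specification of ${\tt ZDToRC}(\T,\mvar{\T})=[\mathbb{G},B]$ yields both the identity $\V(\T(a) \backslash \I{\T}(a)) = \cup_{G \in \mathbb{G}} \V(G(a))$ and the fact that every $G \in \mathbb{G}$ specializes well at $a$. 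If $\I{\T}(a)$ were the zero polynomial, then $\V(\I{\T}(a))$ would be the whole ambient space, so by the definition of the backslash notation $\V(\T(a) \backslash \I{\T}(a)) = \V(\T(a)) \backslash \V(\I{\T}(a)) = \emptyset$, and hence $\cup_{G \in \mathbb{G}} \V(G(a)) = \emptyset$.

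Next I would use the hypothesis $\mathbb{G} \neq \emptyset$ to select some $G \in \mathbb{G}$. Since $a \notin {\V}^U(B)$, this $G$ specializes well at $a$, so $G(a)$ is a regular chain in $\overline{K}[X]$ with $\rank{G(a)} = \rank{G}$. A regular chain over the algebraically closed field $\overline{K}$ has a non-empty zero set (the $\overline{K}$-analogue of the Proposition following the definition of a regular system, applied with a trivial inequation part), so $\V(G(a)) \neq \emptyset$. But then $\cup_{G \in \mathbb{G}} \V(G(a)) \supseteq \V(G(a)) \neq \emptyset$, contradicting the conclusion reached above. Therefore $\I{\T}(a) \neq 0$, as required.

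The argument is short, so the only point demanding care — the main obstacle, such as it is — is the non-emptiness $\V(G(a)) \neq \emptyset$. This rests on $G(a)$ being genuinely a regular chain, which is exactly what ``specializes well at $a$'' supplies once $a \notin {\V}^U(B)$, together with the fact that a regular chain always has zeros over $\overline{K}$. One should also note that, because ${\tt ZDToRC}$ is invoked here with variable set $\mvar{\T}$ rather than the full $X$, the chains in $\mathbb{G}$ have main variables $\mvar{\T}$; so when $\mvar{\T} \subsetneq X$ the remaining variables play the role of parameters. This does not affect the argument, since a regular chain has a non-empty zero set irrespective of how many variables are left free.
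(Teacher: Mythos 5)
Your argument is correct in substance, but note that the paper offers no proof at all for this proposition --- it is simply declared ``clear'' from the specification of Algorithm {\tt ZDToRC} --- so your contradiction argument via the non-emptiness of the zero set of a specialized regular chain is a legitimate way of making the claim rigorous, and it isolates exactly the right mechanism: if $\I{{\T}}(a)$ were identically zero then ${\V}({\T}(a)\backslash \I{{\T}}(a))$ would be empty, yet the decomposition forces it to contain the non-empty zero set of some specialized regular chain from $\mathbb{G}$.

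The one place you should tighten is the point you yourself flag in the last paragraph and then wave away. When $\mvar{{\T}}\subsetneq X$, the specification of ${\tt ZDToRC}({\T},\mvar{{\T}})$ (interpretation (2) in the paper's discussion before Lemma \ref{LEpositive}) guarantees the identity ${\V}({\T}(a')\backslash \I{{\T}}(a'))=\cup_{{\bf G}}{\V}({\bf G}(a'))$ and good specialization only for \emph{full} specializations $a'\in\overline{K}^{\,d+n-l}\backslash {\V}^{U,\mathcal{F}({\T})}(B)$, whereas the proposition's hypothesis $a\notin{\V}^{U}(B)$ with $a\in\overline{K}^{d}$ only says that $B(a)$ is not the zero polynomial in the variables $\mathcal{F}({\T})$. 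So you cannot literally invoke the specification ``at such an $a$''. The fix is one line: since $B(a)\not\equiv 0$ and $\overline{K}$ is infinite, choose $b\in\overline{K}^{\,n-l}$ with $B(a,b)\neq 0$; run your contradiction at the full point $(a,b)$, where ${\bf G}(a,b)$ is a zero-dimensional regular chain in $\overline{K}[\mvar{{\T}}]$ and hence has a zero, concluding $\I{{\T}}(a,b)\not\equiv 0$ and therefore $\I{{\T}}(a)\not\equiv 0$. With that adjustment the proof is complete; without it, the step ``every ${\bf G}$ specializes well at $a$'' is not what the algorithm's output specification actually asserts.
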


 Now suppose ${\T}=\{T_1,\ldots,T_l\}$ is a triangular set in $K[U][X]$ and $\mvar{\T}\subsetneq X$ where variables $X=\{x_1, \ldots, x_n\}$ and parameters $U=\{u_1, \ldots, u_d\}$.  It is interesting to show that we have two versions to interpret the relationship between ${\T}$ and the results computed by ${\tt ZDToRC}({\T},\mvar{\T})$. Assume that ${\tt ZDToRC}({\T},\mvar{\T})=[\mathbb{G}, B]$. Let $\mathcal{B}({\T})=\mvar{\T}$ and $\mathcal{F}({\T})=X\backslash \mvar{\T}$.
On one hand,  ${\T}$ can be regarded as a triangular set  in $K[U,\mathcal{F}({\T})][\mathcal{B}({\T})]$. At this point, according to Algorithm \ref{ALzdtorc}, we know that \\
(1)${\V}_{\overline{K(U,\mathcal{B}({\T}))}}({\bf T}\backslash \I{{\T}})=\cup
             _{{\bf G}\in \mathbb{G}}{\V}_{\overline{K(U,\mathcal{F}({\T}))}}({\bf G})$,\\
             (2)for any $a\in \overline{K}^{d+n-l}\backslash {\V}^{U,\mathcal{F}({\T})}(B)$, ${\V}({\T}(a)\backslash \I{{\T}}(a))=\cup _{{\bf G}\in \mathbb{G}}{\V}({\bf G}(a))$ and ${\bf G}$ specializes well at $a$ for any ${\bf G}\in \mathbb{G}$ if $\mathbb{G}\neq \emptyset$.

On the other hand, ${\T}$ can also be regarded as a triangular set  in $K[U][\mathcal{F}({\T})][\mathcal{B}({\T})]$. Let $\mathcal{K}=K(U)$, $\mathcal{U}=\mathcal{F}({\T})$ and $\mathcal{X}=\mathcal{B}({\T})$. Then according to Algorithm \ref{ALzdtorc},\\
(3)${\V}_{\overline{\mathcal{K}(\mathcal{U})}}({\bf T}\backslash \I{{\T}})=\cup
             _{{\bf G}\in \mathbb{G}}{\V}_{\overline{\mathcal{K}(\mathcal{U})}}({\bf G})$,\\
(4)for any $a\in \overline{\mathcal{K}}^{n-l}\backslash {\V}^{\mathcal{U}}(B)$, ${\V}_{\overline{\mathcal{K}}}({\T}(a)\backslash \I{{\T}}(a))=\cup _{{\bf G}\in \mathbb{G}}{\V}_{\overline{\mathcal{K}}}({\bf G}(a))$ and ${\bf G}$ specializes well at $a$ for any ${\bf G}\in \mathbb{G}$ if $\mathbb{G}\neq \emptyset$.\\
Remark that  the above statements (1) and (3) are exactly the same since $K(U,\mathcal{U})=\mathcal{K}(\mathcal{U})$. As discussed above, we have the following lemma  by statements (2) and (4)

\begin{lemma}\label{LEpositive}
Suppose ${\T}$ is a triangular set in $K[U][X]$ and ${\tt ZDToRC}({\T},\mvar{\T})=[\mathbb{G}, B]$. Then ${\V}_{\overline{K(U)}}({\T}\backslash \I{{\T}}\cdot B)=\cup _{{\bf G}\in \mathbb{G}}{\V}_{\overline{K(U)}}({\bf G}\backslash B)$ and  ${\V}({\T}(a)\backslash \I{{\T}}(a)\cdot B(a))=\cup _{{\bf G}\in \mathbb{G}}{\V}({\bf G}(a)\backslash B(a))$ for any $a\in \overline{K}^d\backslash {\V}^U(B)$.
\end{lemma}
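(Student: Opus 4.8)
The plan is to derive the two asserted identities directly from the two specialization statements already recorded for ${\tt ZDToRC}({\T},\mvar{\T})$: the interpretation-1 statement (2) (which specializes both $U$ and $\mathcal{F}({\T})$ and lives over $\overline{K}$) and the interpretation-2 statement (4) (which specializes only $\mathcal{F}({\T})$ and lives over $\overline{\mathcal K}=\overline{K(U)}$). The key bookkeeping observation that makes everything work is that, since ${\tt ZDToRC}$ regards every indeterminate other than $\mvar{\T}$ as a parameter, the output polynomial $B$ lies in $K[U][\mathcal{F}({\T})]$ and involves none of the main variables $\mathcal{B}({\T})=\mvar{\T}$. Consequently the value of $B$ at a point depends only on the $\mathcal{F}({\T})$-coordinates, and $\V^{U}(B)$, $\V^{\mathcal U}(B)$ and $\V^{U,\mathcal F(\T)}(B)$ are all loci in the respective $\mathcal{F}({\T})$-parameter spaces. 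My plan is to prove each identity by slicing the ambient space into fibers over the $\mathcal{F}({\T})$-coordinates and applying the appropriate specialization statement one fiber at a time.

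For the first identity I would work over $\overline{K(U)}$ and write a point of $\overline{K(U)}^{\,n}$ as $(b,f)$, with $b$ the $\mathcal{B}({\T})$-coordinates and $f$ the $\mathcal{F}({\T})$-coordinates. Writing ${\T}(f)$ for the result of substituting $\mathcal{F}({\T})=f$, the fact that $B$ is independent of $\mathcal{B}({\T})$ gives the equivalences: $(b,f)\in\V_{\overline{K(U)}}({\T}\backslash \I{\T}\cdot B)$ iff $B(f)\neq 0$ and $b\in\V_{\overline{K(U)}}({\T}(f)\backslash \I{\T}(f))$; and $(b,f)\in\cup_{{\bf G}}\V_{\overline{K(U)}}({\bf G}\backslash B)$ iff $B(f)\neq 0$ and $b\in\cup_{{\bf G}}\V_{\overline{K(U)}}({\bf G}(f))$. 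For each admissible $f$, i.e.\ $f\in\overline{K(U)}^{\,n-l}\setminus\V^{\mathcal U}(B)$, statement (4) supplies precisely $\V_{\overline{K(U)}}({\T}(f)\backslash \I{\T}(f))=\cup_{{\bf G}}\V_{\overline{K(U)}}({\bf G}(f))$, so the two fibers agree; the fibers over $f$ with $B(f)=0$ are empty on both sides. Taking the union over all $f$ yields the first identity.

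The second identity I would obtain identically, but with $U$ first specialized to $a\in\overline{K}^{d}\setminus\V^{U}(B)$ and with statement (2) replacing (4). Writing a point of $\overline{K}^{\,n}$ as $(b,f)$ again, membership of $(b,f)$ in $\V({\T}(a)\backslash \I{\T}(a)\cdot B(a))$ is equivalent to $B(a)(f)\neq 0$ together with $b\in\V({\T}(a,f)\backslash \I{\T}(a,f))$, and analogously for the right-hand side. Setting $\tilde a=(a,f)\in\overline{K}^{\,d+n-l}$, the condition $B(a)(f)\neq 0$ is exactly $\tilde a\notin\V^{U,\mathcal F(\T)}(B)$, so statement (2) applies and gives $\V({\T}(\tilde a)\backslash \I{\T}(\tilde a))=\cup_{{\bf G}}\V({\bf G}(\tilde a))$; once more the excluded fibers are empty on both sides, and the union over $f$ produces the second identity. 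The hypothesis $a\notin\V^{U}(B)$ ensures the admissible $f$ form a non-empty (indeed Zariski-dense) set, so statement (2) is genuinely invoked.

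The arithmetic in each fiber is routine; the only real content, and the place where care is needed, is the initial structural observation that $B$ depends only on $U$ and $\mathcal{F}({\T})$, together with recognizing that the two interpretations of ${\tt ZDToRC}$ deliver exactly the two fiber-wise equalities required—one over $\overline{K(U)}$ and one over $\overline{K}$. The delicate verifications are matching the exclusion loci $\V(\I{\T}\cdot B)$ and $\V(B)$ consistently on the two sides, and confirming that substitution of $\mathcal{F}({\T})=f$ commutes with the other evaluations, so that ${\T}(a,f)$, ${\T}(\tilde a)$ and the fiber of ${\T}(a)$ over $f$ all refer to the same object. Beyond statements (2) and (4) no new theory is needed, so I expect this to be the main obstacle only in the sense of keeping the many specializations and their defining conditions straight.
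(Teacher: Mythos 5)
Your proposal is correct and follows essentially the same route as the paper, which simply asserts that the lemma follows from the two interpretation statements (2) and (4) of {\tt ZDToRC}; your fiber-wise elaboration over the $\mathcal{F}({\T})$-coordinates, resting on the observation that $B\in K[U][\mathcal{F}({\T})]$ is independent of $\mvar{\T}$, is exactly the missing bookkeeping the paper leaves implicit.
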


%\begin{proposition}\label{zdtors1}
%With notation of Algorithm \ref{ALZDToRC},if the output $\mathbb{G} \neq \emptyset$,for $a \notin V^U(B)$ ,$I_{\T}(a) \neq 0$.
%\end{proposition}

\begin{algorithm}\label{ALpdtors}
\SetAlgoCaptionSeparator{. }
\caption{\tt{TSToRS}}
\DontPrintSemicolon
\KwIn{A triangular set ${\bf T}=\{T_1, \ldots, T_l\}\subset K[U][X]$, variables $X=\{x_1,\ldots,x_n\}$. }
\KwOut{ [$\mathbb{G}$,$B$], where\\
             (1) $\mathbb{G}$ is a  finite set  of  regular systems in $K[U][X]$ such that     ${\V}_{\overline{K(U)}}({\bf T}\backslash \I{{\T}})=\cup
             _{[{\bf G},H]\in {\mathbb G}}{\V}_{\overline{K(U)}}({\bf G}\backslash H)$;\\
             (2) $B$ is a polynomial  in $ K[U]$, for any $a\in \overline{K}^{d}\backslash {\V}^{U}(B)$, ${\V}({\T}(a)\backslash \I{{\T}}(a))=\cup _{[{\bf G},H]\in {\mathbb G}}{\V}({\bf G}(a)\backslash H(a))$ and
             $[{\bf G},H]$ specializes well at $a$ for any $[{\bf G},H]\in \mathbb{G}$ if $\mathbb{G}\neq \emptyset$. }
           $W$:={\tt ZDToRC}(${\T}, \mvar{\T}$)\;
           $\mathbb{G}$:=$\map{t\rightarrow [t, W_2], W_1}$\;
           \If {$W_2 \in K[U]$}
           {return [$\mathbb{G}$, $W_2$]\;}
           Compute a Wu's decomposition  $\{{\bf C}_1,{\bf C}_2,\ldots,{\bf C}_m\}$  of $\{W_2\}$ in $K[U][X]$ by Wu's method\;
            $B$:=1\;
            \For{$i = 1 \to m$}
         {\eIf{${\bf C}_i$ is a contradictory ascending chain }{ $B$:=$B\cdot {\tt op}({\bf C}_i)$}{
         ${\T}_i:={\T} \cup {\bf C}_i$\;
         $\mathbb{G}:=\mathbb{G} \cup {\tt TSToRS}({\T}_i,X)_1$, $B:=B\cdot {\tt TSToRS}({\T}_i,X)_2$}}

         return $[\mathbb{G},B]$
\end{algorithm}

\begin{theorem}\label{THforALpdtors}
Algorithm \ref{ALpdtors} terminates correctly.
\end{theorem}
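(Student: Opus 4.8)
The plan is to prove termination and the two output requirements together by induction on $d({\T},X)$, the number of free variables, keeping the notation $\mathcal{B}({\T})=\mvar{\T}$ and $\mathcal{F}({\T})=X\backslash\mvar{\T}$ of the paragraph preceding Lemma \ref{LEpositive}. The only recursive calls are on the sets ${\T}_i={\T}\cup{\bf C}_i$ assembled from the non-contradictory chains ${\bf C}_i$ of a Wu's decomposition of $\{W_2\}$, so the first thing to verify is that these calls are legitimate and strictly smaller.

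For termination and well-definedness: when $W_2\notin K[U]$, the polynomial $W_2={\tt ZDToRC}({\T},\mvar{\T})_2$ is the ``$B$''-output of {\tt ZDToRC} and hence lies in $K[U][\mathcal{F}({\T})]$; therefore Wu's method applied to $\{W_2\}$ stays inside $K[U][\mathcal{F}({\T})]$ and each ${\bf C}_i$ has $\mvar{{\bf C}_i}\subseteq\mathcal{F}({\T})$. Thus $\mvar{{\bf C}_i}$ is disjoint from $\mvar{\T}$, so ${\T}_i={\T}\cup{\bf C}_i$ is again a triangular set, and since a non-contradictory ascending chain has at least one main variable, $d({\T}_i,X)<d({\T},X)$. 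When $\mathcal{F}({\T})=\emptyset$ we have $\mvar{\T}=X$ and {\tt ZDToRC} returns $W_2\in K[U]$, so the algorithm stops in the \textbf{if}-branch; hence the recursion bottoms out after finitely many steps.

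For requirement (1), note first that by Lemma \ref{LEpositive} the map-systems always account for ${\V}_{\overline{K(U)}}({\T}\backslash\I{\T}\cdot W_2)=\cup_{{\bf G}\in W_1}{\V}_{\overline{K(U)}}({\bf G}\backslash W_2)$, and that each $[{\bf G},W_2]$ is a genuine regular system because $W_2$ is free of $\mvar{{\bf G}}$, whence $\res{W_2,{\bf G}}=W_2^{e}\neq0$ for a positive integer $e$ (using $W_2\neq0$, Proposition \ref{pro2}). If $W_2\in K[U]$ then $W_2$ is a nonzero constant in $X$, so ${\V}_{\overline{K(U)}}(W_2)=\emptyset$ and the above is already all of ${\V}_{\overline{K(U)}}({\T}\backslash\I{\T})$; this is the base case. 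Otherwise I split
\[{\V}_{\overline{K(U)}}({\T}\backslash\I{\T})={\V}_{\overline{K(U)}}({\T}\backslash\I{\T}\cdot W_2)\cup\big({\V}_{\overline{K(U)}}({\T}\backslash\I{\T})\cap{\V}_{\overline{K(U)}}(W_2)\big),\]
and rewrite ${\V}_{\overline{K(U)}}(W_2)=\cup_i{\V}_{\overline{K(U)}}({\bf C}_i\backslash\I{{\bf C}_i})$ over the non-contradictory ${\bf C}_i$; intersecting and using $\I{{\T}_i}=\I{\T}\cdot\I{{\bf C}_i}$ turns the second piece into $\cup_i{\V}_{\overline{K(U)}}({\T}_i\backslash\I{{\T}_i})$, which the induction hypothesis rewrites through ${\tt TSToRS}({\T}_i,X)_1$. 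The two pieces together are exactly $\cup_{[{\bf G},H]\in\mathbb{G}}{\V}_{\overline{K(U)}}({\bf G}\backslash H)$.

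The step I expect to be the main obstacle is requirement (2), because $B$ does not contain $W_2$ as a factor and yet I must ensure $W_2(a)\not\equiv0$ whenever $a\notin{\V}^U(B)$. This is where Corollary \ref{wuonepoly} is decisive: as $W_2\notin K[U]$, the initial-branching of Wu's method terminates in at least one contradictory chain, and fixing a line $\mathcal{L}$ of the decomposition with endpoint ${\bf C}_{l,k}=\{C_{l,k}\}$, the polynomial $C_{l,k}$ divides $B$ through the assignment $B:=B\cdot{\tt op}({\bf C}_i)$; hence $a\notin{\V}^U(B)$ gives $a\notin{\V}^U({\bf C}_{l,k})$, and Corollary \ref{wuonepoly} then yields $W_2(a)\not\equiv0$, i.e. $a\notin{\V}^U(W_2)$. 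With this fact the specialized variety identity follows from the same two-piece split, using the specialization half of Lemma \ref{LEpositive} where $W_2(a)\neq0$ and Corollary \ref{wudecom} (applicable since $B$ carries every contradictory-chain factor) together with the induction hypothesis where $W_2(a)=0$. For ``specializes well'' the recursive systems are handled by the induction hypothesis since each recursive second output divides $B$; for a map-system $[{\bf G},W_2]$ one has $\res{W_2(a),{\bf G}(a)}=W_2(a)^{e}\neq0$ from $W_2(a)\not\equiv0$ and $W_2$ being free of $\mvar{{\bf G}}$, while ``${\bf G}(a)$ is a regular chain with $\rank{{\bf G}(a)}=\rank{{\bf G}}$'' must be transported from the specialization guarantee of {\tt ZDToRC}. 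The genuine subtlety, and the core of the proof, is reconciling the two readings of the {\tt ZDToRC} output recorded before Lemma \ref{LEpositive} — a joint specialization of $U\cup\mathcal{F}({\T})$ versus a specialization of $U$ alone — and it is exactly the fact $a\notin{\V}^U(W_2)$ that lets the joint-specialization statement descend to the $U$-only statement demanded by Definition \ref{DeRSSwell}.
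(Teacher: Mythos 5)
Your proposal takes essentially the same route as the paper's proof: termination from the strict decrease of $d({\T},X)$ along the recursive calls ${\T}_i={\T}\cup{\bf C}_i$, and correctness by induction using the split ${\V}_{\overline{K(U)}}({\T}\backslash \I{{\T}})={\V}_{\overline{K(U)}}({\T}\backslash \I{{\T}}\cdot W_2)\cup({\V}_{\overline{K(U)}}({\T}\backslash \I{{\T}})\cap{\V}_{\overline{K(U)}}(W_2))$, with Lemma \ref{LEpositive} handling the first piece, Corollary \ref{wudecom} plus the induction hypothesis handling the second, and Lemma \ref{wusinit} with Corollary \ref{wuonepoly} guaranteeing $W_2(a)\neq 0$ off ${\V}^{U}(B)$, exactly as in the paper. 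The differences are only cosmetic (induction on $d({\T},X)$ rather than on recursion depth, and your explicit check that each $[{\bf G},W_2]$ is a genuine regular system), so the argument is correct and matches the paper's.
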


%\begin{remark}\label{zdtranspd}
%Before proof the correctness, we have assumed that Algorithm \ref{ALzdtors} and Algorithm \ref{Dewrsd} terminating  correctly. And all these  two algorithms have been proofed terminating correctly in former article. When we call the ${\tt ZDToRS}({\T},F ,mvar(\T))$,with the notation in Remark \ref{wrsd1} and Remark \ref{wrsd1}, it returns the Relatively Simplicial Decomposition  $\V_{\overline{K(U)}}({\T}\backslash F)=(\cup_{{\bf H}\in {\mathbb H}}\V_{\overline{K(U)}}({\bf H}\backslash F))\cup( \cup_{{\bf G}\in \mathbb{G}}\V_{\overline{K(U)}}({\bf G}\backslash F))$.
%\end{remark}
\begin{proof}
For a given triangular set ${\bf T}=\{T_1, \ldots, T_l\}$ in $K[U][X]$. Suppose  ${\tt ZDToRC}({\T},\mvar{\T})=[\mathbb{G}_0, B_0]$.

Firstly, we prove Algorithm \ref{ALpdtors} terminates.  If $B_0 \in K[U]$, it terminates obviously. Otherwise, $B_0 \in K[U][\mathcal{F}({\T})]\subset K[U][X]$. Assume that $\{{\bf C}_1,{\bf C}_2,\ldots,{\bf C}_m\}$ is the Wu's decomposition of $\{B_0\}$ in $K[U][X]$ computed by Wu's method and $\mathbb{S}= \{ {\bf C}_i|1\leq i \leq m$ and ${\bf C}_i$ is a non-contradictory ascending chain$\}$.  Obviously $m>1$ since $B_0 \in K[U][\mathcal{F}({\T})]\backslash K[U]$.  For any ${\bf C}_i \in \mathbb{S}$, let ${\T}_i={\T} \cup {\bf C}_i$. Then we know that $\mvar{\T} \subsetneq \mvar{{\T}_i}$, which means $d({\T}_i,X)<d({\T}, X)$. Therefore, it is clearly that Algorithm \ref{ALpdtors} terminates within finite steps of recursion.

Now we prove the correctness by induction on the recursive depth $h$.  If $h=1$, according to Algorithm \ref{ALpdtors}, $B_0 \in K[U]$ and the conclusion follows from Lemma \ref{LEpositive}. Assume that the conclusion holds for $h <N$ ($N > 1$). When $h=N$, $B_0 \in K[U][\mathcal{F}({\T})]\backslash K[U]$.  We can assume that $\{{\bf C}_1,{\bf C}_2,\ldots,{\bf C}_m\}$ is the Wu's decomposition of $\{B_0\}$ in $K[U][X]$ computed by Line 5 in Algorithm \ref{ALpdtors}. Let $\mathcal{N}= \{ i|1\leq i \leq m$ and ${\bf C}_i$ is a non-contradictory ascending chain$\}$ and $\mathcal{CN}=\{1,2,\ldots,m\} \backslash \mathcal{N}$. For any $i \in \mathcal{N}$, let ${\T}_i={\T} \cup {\bf C}_i$ and then ${\T}_i$ is a triangular set in $K[U][X]$. Suppose ${\bf C}_j=\{C_j\}$ ($j \in \mathcal{CN}$), ${\tt TSToRS}({\T}_1,\mvar{{\T}_1})=[\mathbb{G}_i, B_i]$ for any $i$  ($i \in \mathcal{N}$).  Then according to Algorithm \ref{ALpdtors}, ${\mathbb G}=\cup_{{\bf G}
\in {\mathbb G}_0}\{[{\bf G}, B_0]\}\cup \cup_{i \in \mathcal{N}}{\mathbb G}_i$ and $B=\prod_{j \in \mathcal{CN}}C_j\cdot \prod_{i\in \mathcal{N}}B_i.$ By Lemma \ref{LEpositive}, Wu's method and the induction hypothesis, we get
  \begin{equation*}
\begin{split}
&{\V}_{\overline{K(U)}}({\T} \backslash \I{\T})\\
&={\V}_{\overline{K(U)}} ({\T} \backslash \I{\T}\cdot B_0) \cup {\V}_{\overline{K(U)}}({\T}\cup \{B_0\}\backslash \I{\T})\\
&=\cup_{{\bf G}\in {\mathbb G}_0}{\V}_{\overline{K(U)}}({\bf G}\backslash B_0) \cup( {\V}_{\overline{K(U)}}({\T}\backslash \I{\T})\cap {\V}_{\overline{K(U)}}(B_0))\\
&=\cup_{{\bf G}\in {\mathbb G}_0}{\V}_{\overline{K(U)}}({\bf G}\backslash B_0) \cup( {\V}_{\overline{K(U)}}({\T}\backslash \I{\T})\cap \cup_{i\in \mathcal{N}}{\V}_{\overline{K(U)}}({\bf C}_i\backslash \I{{\bf C}_i}))\\
&=\cup_{{\bf G}\in {\mathbb G}_0}{\V}_{\overline{K(U)}}({\bf G}\backslash B_0) \cup( \cup_{i\in \mathcal{N}}{\V}_{\overline{K(U)}}({{\T}_i}\backslash \I{{\T}_i}))\\
&=\cup_{{\bf G}\in {\mathbb G}_0}{\V}_{\overline{K(U)}}({\bf G}\backslash B_0) \cup( \cup_{i\in \mathcal{N}}\cup_{[{\bf G}, H]\in {\mathbb G}_i}{\V}_{\overline{K(U)}}({\bf G}\backslash H))).\\
\end{split}
\end{equation*}
Therefore,  the statement (1) in the specification of Algorithm \ref{ALpdtors} holds.

 For any $a\in \overline{K}^d\backslash {\V}^U (B)$, $C_j(a)\neq 0$ and $B_i(a)\neq 0$ for any $i\in \mathcal{N}$ and $j\in \mathcal{CN}$. Thus by Corollary \ref{wudecom},  ${\V}(B_0(a))=\cup_{i\in \mathcal{N}}{\V}({\bf C}_i(a)\backslash \I{{\bf C}_i}(a))$. By Lemma \ref{LEpositive} and  the induction hypothesis, we get
\begin{equation*}
\begin{split}
&{\V}({\T}(a) \backslash \I{\T}(a))\\
&={\V} ({\T}(a) \backslash \I{\T}(a)\cdot B_0(a)) \cup {\V}({\T}(a)\cup \{B_0(a)\}\backslash \I{\T}(a))\\
&=(\cup_{{\bf G}\in {\mathbb G}_0}{\V}({\bf G}(a)\backslash B_0(a)) )\cup( {\V}({\T}(a)\backslash \I{\T}(a))\cap {\V}(B_0(a)))\\
&=(\cup_{{\bf G}\in {\mathbb G}_0}{\V}({\bf G}(a)\backslash B_0(a)) )\cup( {\V}({\T}(a)\backslash \I{\T}(a))\cap \cup_{i\in \mathcal{N}}{\V}({\bf C}_i(a)\backslash \I{{\bf C}_i}(a)))\\
&=(\cup_{{\bf G}\in {\mathbb G}_0}{\V}({\bf G}(a)\backslash B_0(a)) ) \cup( \cup_{i\in \mathcal{N}}{\V}({\T}_i(a)\backslash \I{{\T}_i}(a)))\\
&=(\cup_{{\bf G}\in {\mathbb G}_0}{\V}({\bf G}(a)\backslash B_0(a)) ) \cup( \cup_{i\in \mathcal{N}}\cup_{[{\bf G}, H]\in {\mathbb G}_i}{\V}({\bf G}(a)\backslash H(a))).\\
\end{split}
\end{equation*}
In addition, by Lemma \ref{wusinit} and Corollary \ref{wuonepoly},  $B_0(a) \neq 0$. Thus $[{\bf G}, B_0]$ specializes well at $a$ for every ${\bf G}\in {\mathbb G}_0$ according to Algorithm \ref{ALzdtorc}. By the induction hypothesis, we know that $[{\bf G}, H]$ specializes well at $a$ for every $[{\bf G},H]\in {\mathbb G}_i$ for any $i \in \mathcal{N}$. Therefore, the statement (2) in the specification of Algorithm \ref{ALpdtors} holds.
\end{proof}

\begin{remark}
 By Algorithm \ref{ALpdtors}, for any regular system $[{\T}_i,H_i]$ in the  first output of Algorithm \ref{ALpdtors}, $\res{\I{{\T}}_i,{\T}_i}$ is a factor of $H_i$. If $B \in K[U]$ is the second output of Algorithm \ref{ALpdtors}, by Corollary \ref{wuonepoly}, we know that ${\V}_{\overline{K(U)}} (\res{\I{{\T}_i},{\T}_i}) \subset  {\V}_{\overline{K(U)}}(H_i)$  and
${\V}^U(\res{\I{{\T}_i},{\T}_i}) \subset  {\V}^U(H_i) \subset {\V}^U(B)$.
\end{remark}

\subsection{Computing RDU}

We present the main result in this section.  Algorithm \ref{ALsus} shows how to compute a generic regular system decomposition  and the associated RDU of a given  system simultaneously.

\begin{algorithm}\label{ALsus}
\SetAlgoCaptionSeparator{. }
\DontPrintSemicolon
\caption{{\tt RDU}}
\KwIn{ A parametric system  ${\bf P}$  in  $K[U][X]$,  variables $X=\{x_1,\ldots,x_n\}$.}
\KwOut{ [$\mathbb{TH}$, $B$], where\\

        (1) $\mathbb{TH}$ is a  finite set  of regular systems in $K[U][X]$  such that
        ${\V}_{\overline{K(U)}}({\P})=\cup_{[{\T}, H]\in \mathbb{TH}}{\V}_{\overline{K(U)}}({\T}\backslash H)$,

        (2) $B$   a polynomial in $ K[U]$ for any $a\in \overline{K}^d\backslash {\V}^{U}(B)$,
        ${\V}({\P}(a))=\cup_{[{\T}, H]\in \mathbb{TH}}{\V}({\T}(a)\backslash H(a))$ and $[{\T},H]$ specializes well at $a$ for any $[{\T},H]\in  \mathbb{TH}$.}
     Compute a Wu's decomposition $\{{\bf C}_1, \ldots, {\bf C}_m\}$ of $\P$ in $K[U][X]$ by Wu's method\;
     $B$:=1\;
     $\mathbb{TH}$:=$\emptyset$\;
     \For{$i = 1 \to m$}
     {\eIf{${\bf C}_i$ is a contradictory ascending chain}{
     $B$:=$B \cdot op({\bf C}_i)$}{
     W:=${\tt TSToRS}({\bf C}_i,X)$\;
      $\mathbb{TH}$:=$\mathbb{TH}\cup W_1$,
      $B$:=$B\cdot W_2$ }}
      return [$\mathbb{TH}$, $B$]\;
\end{algorithm}

\begin{theorem}
Algorithm \ref{ALsus} terminates correctly.
\end{theorem}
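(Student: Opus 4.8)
The plan is to mirror the proof of Theorem~\ref{THforALpdtors}, since Algorithm~\ref{ALsus} has the same shape as Algorithm~\ref{ALpdtors}: it runs Wu's method once, pushes the contradictory chains into the factor $B$, and feeds each non-contradictory chain to {\tt TSToRS}. Termination is immediate: Wu's method returns a finite sequence $\{{\bf C}_1,\ldots,{\bf C}_m\}$, the {\tt for} loop is finite, and each call ${\tt TSToRS}({\bf C}_i,X)$ halts by Theorem~\ref{THforALpdtors}. So the content is in the two correctness claims. Throughout, let $\mathbb{S}$ (resp. $\mathbb{CS}$) be the set of non-contradictory (resp. contradictory) chains among the ${\bf C}_i$, write ${\tt TSToRS}({\bf C}_i,X)=[\mathbb{G}_i,B_i]$ for ${\bf C}_i\in\mathbb{S}$, and note that at return $\mathbb{TH}=\cup_{{\bf C}_i\in\mathbb{S}}\mathbb{G}_i$ (a finite union of finite sets of regular systems) and $B=\prod_{{\bf C}_j\in\mathbb{CS}}{\tt op}({\bf C}_j)\cdot\prod_{{\bf C}_i\in\mathbb{S}}B_i$.

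For claim (1) I would chain two identities. First, the generic part of Wu's decomposition (Section~\ref{sfuhaoshuoming}) gives ${\V}_{\overline{K(U)}}({\P})=\cup_{{\bf C}_i\in\mathbb{S}}{\V}_{\overline{K(U)}}({\bf C}_i\backslash\I{{\bf C}_i})$. Second, the first specification of {\tt TSToRS} (Theorem~\ref{THforALpdtors}) rewrites each summand as ${\V}_{\overline{K(U)}}({\bf C}_i\backslash\I{{\bf C}_i})=\cup_{[{\bf G},H]\in\mathbb{G}_i}{\V}_{\overline{K(U)}}({\bf G}\backslash H)$. Substituting and using $\mathbb{TH}=\cup_i\mathbb{G}_i$ yields ${\V}_{\overline{K(U)}}({\P})=\cup_{[{\T},H]\in\mathbb{TH}}{\V}_{\overline{K(U)}}({\T}\backslash H)$, which is exactly (1); that each $[{\T},H]$ is a regular system is guaranteed by the output of {\tt TSToRS}.

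For claim (2) I would fix $a\in\overline{K}^d\backslash{\V}^U(B)$. The crucial book-keeping is that the factorization of $B$ forces ${\tt op}({\bf C}_j)(a)\neq0$ for every ${\bf C}_j\in\mathbb{CS}$ and $B_i(a)\neq0$ for every ${\bf C}_i\in\mathbb{S}$. The first of these places $a$ outside $\cup_{{\bf CS}\in\mathbb{CS}}{\V}^U({\bf CS})$, so Corollary~\ref{wudecom} applies and gives ${\V}({\P}(a))=\cup_{{\bf C}_i\in\mathbb{S}}{\V}({\bf C}_i(a)\backslash\I{{\bf C}_i}(a))$. The second, together with the second specification of {\tt TSToRS}, rewrites each summand as ${\V}({\bf C}_i(a)\backslash\I{{\bf C}_i}(a))=\cup_{[{\bf G},H]\in\mathbb{G}_i}{\V}({\bf G}(a)\backslash H(a))$ and, at the same time, tells us that every $[{\bf G},H]\in\mathbb{G}_i$ specializes well at $a$. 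Combining these facts with $\mathbb{TH}=\cup_i\mathbb{G}_i$ gives ${\V}({\P}(a))=\cup_{[{\T},H]\in\mathbb{TH}}{\V}({\T}(a)\backslash H(a))$ with every $[{\T},H]\in\mathbb{TH}$ specializing well at $a$, which is (2).

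The only real obstacle is the interaction between the two roles of the parameter polynomial $B$: one must check that a single bad set ${\V}^U(B)$ simultaneously (i) keeps $a$ off the varieties of all contradictory chains, so that Corollary~\ref{wudecom} is licensed, and (ii) keeps each $B_i(a)\neq0$, so that every {\tt TSToRS} block both decomposes correctly under specialization and specializes well. This is precisely what the product structure $B=\prod_{{\bf C}_j\in\mathbb{CS}}{\tt op}({\bf C}_j)\cdot\prod_{{\bf C}_i\in\mathbb{S}}B_i$ secures, so no new estimate is needed; the remaining steps are the routine substitutions above, exactly parallel to Theorem~\ref{THforALpdtors}.
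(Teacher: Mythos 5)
Your proposal is correct and follows essentially the same route as the paper: the paper's own (much terser) proof likewise derives termination from that of Algorithm~\ref{ALpdtors}, claim (1) from Wu's method together with the specification of {\tt TSToRS}, and claim (2) from Corollary~\ref{wudecom} together with the specification of {\tt TSToRS}. Your version merely spells out the bookkeeping on the product structure of $B$ that the paper leaves implicit.
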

\begin{proof}
 The termination follows from the termination of Algorithm \ref{ALpdtors}. We only need to show the correctness. In fact, the statement (1) in the specification of Algorithm \ref{ALsus} follows from Wu's method and Algorithm \ref{ALpdtors} and the statement (2) follows from Corollary \ref{wudecom} and Algorithm \ref{ALpdtors}.
\end{proof}

\begin{corollary}\label{CO}
Let $\P$ be a parametric polynomial system, ${\tt RDU}({\P},X)= [\mathbb{TH},B]$. Then $\mathbb{TH}$ is stable at any $a\in\overline{K}^d \backslash {\V}^U(B)$.
\end{corollary}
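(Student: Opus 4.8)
The plan is to read the conclusion directly off the correctness of Algorithm~\ref{ALsus}, which has just been established, combined with the definition of stability. Since ${\tt RDU}({\P},X)=[\mathbb{TH},B]$, I may quote the output specification of that algorithm verbatim and simply recast it in the vocabulary of stable decompositions.

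First I would recall what stability at a point $a$ demands. By Definition~\ref{regsys} together with the definition of a \emph{stable} parametric regular system decomposition, two things must hold: (i) $\mathbb{TH}$ is a parametric regular system decomposition of $\P$, i.e.\ ${\V}_{\overline{K(U)}}({\P})=\cup_{[{\T},H]\in\mathbb{TH}}{\V}_{\overline{K(U)}}({\T}\backslash H)$; and (ii) at the point $a$ one has the affine equality ${\V}({\P}(a))=\cup_{[{\T},H]\in\mathbb{TH}}{\V}({\T}(a)\backslash H(a))$ together with the requirement that every $[{\T},H]\in\mathbb{TH}$ specializes well at $a$.

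Condition (i) is precisely statement~(1) in the output specification of Algorithm~\ref{ALsus}. For condition (ii), I would fix an arbitrary $a\in\overline{K}^d\backslash{\V}^U(B)$ and invoke statement~(2) of the same specification, which supplies both the affine variety equality and the good specialization of every regular system in $\mathbb{TH}$ at such $a$. Matching (i) and (ii) against the definition of stability then shows that $\mathbb{TH}$ is stable at every $a\in\overline{K}^d\backslash{\V}^U(B)$, as claimed.

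There is essentially no analytic obstacle here: the corollary is a direct repackaging of the algorithm's correctness into the language of stability. The only point requiring care is the bookkeeping that aligns the two clauses of the definition of stability with the two statements of the output specification of Algorithm~\ref{ALsus}; once that correspondence is made explicit, the result is immediate.
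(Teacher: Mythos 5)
Your proposal is correct and matches the paper's intent: the corollary is a direct unpacking of statements (1) and (2) in the output specification of Algorithm~\ref{ALsus} against Definition~\ref{regsys} and the definition of a stable decomposition, which is exactly the bookkeeping you perform. The paper itself omits this and merely refers the reader to \cite{tang}, so your write-up is, if anything, more explicit than the published argument.
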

The proof of Corollary \ref{CO} is similar to that of generic zero-dimensional case. For more details, please see \cite{tang}.

\section{Examples and Implementation}\label{sectionexamples}
In this section, we show by examples how our algorithms work. In addition, we run some benchmarks and report comparison to other tools with similar function. %What's more, we give some assumption about the property of RDU variety in different variables orders.

%\subsection{Example}
Example \ref{ex1} is designed to illustrate how Algorithm \ref{ALsus} works.
\begin{example}\label{ex1}
Consider the system
 \[{\P}=\begin{cases}
  (ux+1)z^3+(vy+1)z^2+wxz+1\\
   ux+1
\end{cases}
\]
where $x$, $y$ and  $z$ are variables $(x\prec y \prec z)$ and $u$, $v$ and $w$ are  parameters.
\end{example}

{\em \bf Step 1:} According to the first step of Algorithm \ref{ALsus}, we get {Wu's decomposition} $\mathbb{S}=\{{\bf C}_1,{\bf C}_2,{\bf C}_3\}$ of $\P$ in $\mathbb{R}[u,v,w][x,y,z]$ where ${\bf C}_1=\{ux+1, u+uvyz^2+uz^2-wz\}$, ${\bf C}_2=\{ux+1, vy+1, u-wz\}$ and ${\bf C}_3=\{-v^2u^5w\}$.

{\em \bf Step 2:} Let $\mathbb{TH}= \emptyset$ and $B= 1$.

{\em\bf Step 3:} Because ${\bf C}_1$ and ${\bf C}_2$ are both non-contradictory ascending chains and ${\bf C}_3$ is a contradictory ascending chain,  we need to execute ${\tt TSToRS}({\bf C}_1,[x,y,z]\})$ and ${\tt TSToRS}({\bf C}_2,[x,y,z] )$.

$\qquad$               {\em\bf Step 3.1:} According to ${\tt TSToRS}$, we execute ${\tt ZDToRC}({\bf C}_1,\mvar{{\bf C}_1})$ where $\mvar{{\bf C}_1}=\{x,z\}$. It returns $W=[\{\{ux+1, u+uvyz^2+uz^2u-w z\}\}, u(vy+1)]$. Since $W_2 \notin K[u,v,w]$, we execute ${\tt wusolve}(u(vy+1))$ and it returns ${\bf C}_{11}=\{vy+1\},{\bf C}_{12}=\{u\}$.

$\qquad$               {\em\bf Step 3.2:} Let ${\T}_1={\T} \cup \{y v+1\}$. Now we need to execute ${\tt TSToRS}({\T}_1)$. When we execute ${\tt ZDToRC}({\T}_1,\mvar{{\T}_1})$, it returns $[\emptyset,vu]$. Since $vu \in K[u,v,w]$, the output of ${\tt TSToRS}({\bf C}_1,\{x,y,z\}\})$ is $[\{[\{ux+1,u+uvyz^2+z^2\cdot  u-wz\},u(vy+1)]\}, uvw]$.

$\qquad$               {\em\bf Step 3.3:} Let $\mathbb{TH}:= \mathbb{TH} \cup \{[\{ux+1,u+uvyz^2+z^2\cdot  u-wz\},u(vy+1)]\}$ and $B=B\cdot uvw$.

$\qquad$               {\em\bf Step 3.4:} According to ${\tt TSToRS}$, we execute ${\tt ZDToRC}({\bf C}_2,\mvar{{\bf C}_2})$ where $\mvar{{\bf C}_2}=\{x,y,z\}$. It returns $W=[\{ux+1, vy+1, u-wz\},-uvw]$. Since $W_2 \in K[u,v,w]$,  the output of ${\tt TSToRS}({\bf C}_2,[x,y,z]\})$ is $[\{[\{ux+1, vy+1, u-wz\},-uvw]\}, uvw]$

$\qquad$               {\em\bf Step 3.5:} Let $\mathbb{TH}:= \mathbb{TH} \cup \{[\{ux+1, vy+1, u-wz\},-uvw]\}$ and $B=B\cdot uvw$.

$\qquad$               {\em\bf Step 3.6:} Since ${\bf C}_3$ is a contradictory ascending chain, we execute $B:=B \cdot -v^2u^5w$.

{\em\bf Step 4:} Finally, we get $B=u^2v^2w^2$ and $\mathbb{T}= \{[\{ux+1,u+uvyz^2+uz^2-wz\},u (vy+1)],[\{ux+1, vy+1, u-wz\},-uvw]\}$.

In the example, we factor the polynomials and let the polynomial be squarefree in some steps.

With the result above, we get a regular system decomposition of $\P$  in $\mathbb{R}[u,v,w][x,y,z]$ and a  RDU variety $\mathcal{V}=\{(u,v,w) \in \mathbb{C}^3 | uvw =0 \} $. According to the specification of  Algorithm \ref{ALsus}, for any $a \in \mathbb{C}^3 \backslash \mathcal{V}$, ${\V}(P(a))= \cup_{[{\T},H] \in \mathbb{TH}} {\V}({\T}(a) \backslash H(a))$ where $[{\T}(a), H(a)]$ is a regular system. With the definition of RDU variety, we get that $\mathcal{V}$ is the RDU variety of $\P$ {\em w.r.t} $\mathbb{TH}$.

    Example \ref{ex2} is provided by Changbo Chen, which is a good example to show how the orderings of variables affect the results and the efficiency of Algorithm \ref{ALsus}.

% \begin{example}
%Consider P3P problem
% \[{\P}=\begin{cases}
%  (1-a)y^2-ax^2-py+arxy+1\\
%   (1-b)x^2-by^2-qx+brxy+1
%\end{cases}
%\]
%where $x$, $y$ are variables and $a$, $b$, $r$, $p$ and $q$ are  parameters.
%\end{example}

%With the order $ x\prec y$, the parametric polynomial is
% \[ b(-br^2-r^2+r^2a+prq-prqa-p^2+p^2a+bp^2)(-br^2a+2ab+1-2a-2b+a^2+b^2) \]

%With the order $ y\prec x$, the parametric polynomial is
%  \[ a(b-1)(-r^2+br^2-r^2a+prq-rbpq-q^2+aq^2+bq^2)(-br^2a+2ab+1-2a-2b+a^2+b^2) \]

%For the two different order of variables, their parametric polynomials which determine the corresponding RDU varieties are different. What's more, no one of them contains the other.

 \begin{example}\label{ex2}
Consider the parametric system
 \[{\P}=\begin{cases}
  d_4d_3r + r_2^2 - d_4d_3r_2^2 + d_4^2d_3^2 - d_4d_3^3 - d_4^3d_3 + d_4d_3 + Z - r)t^4+ (-2r_2d_4r + 2r_2d_4^3 + \\
  2r_2d_4d_3^2- 4r_2d_3d_4^2 + 2r_2^3d_4 + 2r_2d_4)t^3-(2r_2^2- 2r + 4d_4^2r_2^2+4d_4^2 + 2Z - 2d_4^2d_3^2)t^2 +\\
   (-2r_2d_4r + 2r_2d_4d_3^2 + 2r_2d_4 + 2r_2d_4^3 + 4r_2d_3d_4^2 + 2r_2^3d_4)t + r_2^2 + d_4^3d_3 - d_4d_3r + \\
   d_4d_3r_2^2 + Z - r - d_4d_3 + d_4^2d_3^2 + d_4d_3^3

\end{cases}
\]
where $r$, $Z$, $t$ are variables and $r_2$, $d_3$, $d_4$ are  parameters.
\end{example}

The $3$ variables can be ordered in $6$ different ways. We tried all these orders when calling Algorithm \ref{ALsus} for this example. We are interested in the RDU varieties output by the algorithm, so we only report the second output of the algorithm.

%As for this example, there are 6 kinds of order and 5 kinds of parametric polynomial related to different variables order. In the below output of the {\tt RDU}, since the first part of output is too huge and we only need the second part of output, we leave off the first part of output and only show the second part of output.

By calling ${\tt RDU}({\P},[r,t,Z])$ and ${\tt RDU}({\P},[t,r,Z])$ , we get

  \[{\tt RDU}({\P},[r,t,Z])_2= r_2d_4 \]
  \[{\tt RDU}({\P},[t,r,Z])_2= r_2d_4 \]

By calling ${\tt RDU}({\P},[Z,r,t])$, we get

\[ {\tt RDU}({\P},[Z,r,t])_2=(-d_3+d_4r_2^2+d_4)d_4r_2(d_4d_3-1) \]

By calling ${\tt RDU}({\P},[r,Z,t])$, we get
\[{\tt RDU}({\P},[r,Z,t])_2= r_2d_4(d_3-d_4r_2^2-d_4) \]

 By calling ${\tt RDU}({\P},[t,Z,r])$, we get

\[{\tt RDU}({\P},[t,Z,r])_2= r_2d_4(d_3^2-3r_2^2)(d_4d_3-1) \]

${\tt RDU}({\P},[Z,t,r])_2$  is too huge to be listed here. It has 11 factors and contains 10838 terms (after expanding).

\begin{Table}\label{exam1}
\begin{center}
{Timings (in second) of Example \ref{ex2} under different orders of variables}\par
\begin{tabular}{|c|c|c|c|c|c|c|}
  \hline
  % after \\: \hline or \cline{col1-col2} \cline{col3-col4} ...
   & [r,t,Z] &[t,r,Z] &[Z,r,t] &[r,Z,t] &  [t,Z,r]&   [Z,t,r]  \\
   \hline
  time &0.016 &0.015 &0.016 & 0.046  &  0.016&  3.931  \\
  \hline
\end{tabular}
\end{center}
\end{Table}

The timings for the above computation are shown in Table \ref{exam1}. How to choose a suitable order in advance is an interesting topic for our future work.

%It is indicated by Example 2 that one fact can be verified that, for different  variables orders, the efficiency may be totally different. It can be sustained by example \ref{ex2}.  In addition, the parametric polynomial can be complicated with some orders. In the order $ r\prec t \prec Z$, the parametric polynomial is simple and the related RDU variety is simple and easy to be analyzed. When it comes to the order $Z\prec t \prec r$, everything changes. The parametric polynomial related to $Z \prec t \prec r$ is too huge to be expressed. We should select a proper order of variables to simplify the computation. But, how to select a proper order is still a problem.

%Now let us analyze a simple but  special example

% \begin{example}\label{ex2}
%Consider the parametric system
% \[{\P}= \{ux+vy \} \]

%where $x$, $y$ are variables and $u$,$v$ are  parameters.
%\end{example}

%It contains only one polynomial with 2 variables and 2 parameters.

%With the order of $x \prec y$, the return of ${\tt RDU}({\P}, [x,y])$ is

%\[\{[\{ux+vy\}, v]\} ,v \]
%and the return of  ${\tt RDU}({\P}, [y,x])$ is

%\[ \{[\{ux+vy\}, u]\} ,u\]

%In this example, the parametric polynomials referring to RDU variety are totally different. To show this example is to illustrate that the RDU variety may totally change when the order of the variables changes.

%\subsection{Compare with {\tt Triangularize}}

\begin{Table}\label{res1}
\begin{center}
     {Comparing {\tt RDU} and \tt{Triangularize}}\\
     \begin{tabular}{|c|c|c|c|c|c|c|c|}
     \hline
      number&system &$U$ &$X$ &$\tt{wusolve}$ & $\tt{TSToRS}$&total&$\tt{Triangularize}$ \\
     \hline
     \hline
      1. &{\em Hereman-2} &1&7 & 0.063& 0. &0.063&0.452\\
      2. &{\em Hereman-8-8}  &3&5&0.281&0.015& 0.296&0.188\\
      3. &{\em Maclane} &3&7& 0.093	&0.078&	0.171&	0.156\\
      4.&{\em MontesS7}&1&3&	0.967&	0.015&	0.982&	0.063\\

      5. &{\em MontesS11}&3&3&0.016	&0.&	0.016&	0.\\
      6. &{\em MontesS12}&2&6&0.109&	0.078&	0.187&	0.093\\
      7. &{\em MontesS13}&3&2&0.016&	0&	0.016	&0.031\\

      8. &{\em MontesS14} &1&4&0.031	&0.016	&0.047&	0.141\\
       9.  &{\em MontesS15}&4&8&0.&0.&0.	&0.016\\
        10. &{\em MontesS16}&3&12&0.016	&0.&	0.016&	0.156\\
        11.&{\em MontesS18}&2&3&0.296	&0.031&	0.327&	0.811\\
        12.&{\em AlkashiSinus}&3&6&0.&	0.	&0.&	0.047\\
        13.&{\em Bronstein}&2&2 &0.016&	0.&	0.016&	0.062\\
        14.&{\em Cheaters-homotopy-easy}&4&3&0.358	&10.749&	11.107&	0.047\\
        15.&{\em Cheaters-homotopy-hard}&5&2&0.&	32.867&	32.867	&0.062\\
        16.&{\em Gerdt}&3&4&0.015&	0.&	0.015&	0.016\\
        17.&{\em Lanconelli}&7&4&0.047&	0.&	0.047&	0.\\
       18 .&{\em Lazard-ascm2001}&3&4&0.811	&0.047&	0.858	&0.39\\
       19.&{\em Leykin-1}&4&4&	0.312&	0.&	0.312&	0.531\\
       20.&{\em Neural	}&1&3&0.047&	0.016&	0.063	&0.093\\
        21.&{\em Pavelle}&4&4&0.109&	0.062	&0.171&	0.203\\
    22. &{\em SY14}  &2&2&0.016&0. &0.016&0.\\
     23. &{\em Wang93}  &2&3&0.031&0. &0.031&0.046\\
    24.  &{\em zhou3}  &6&11&0.062	&0.047&	0.109&	0.281\\
    25. &{\em zhou4}  &4&7&0.016&	0.016&	0.032&	0.093\\
    26.&{\em KdV}&15&11&0.406&	0.&	0.406&	0.031\\
    27.&P3P&5&2&	0	&0.031&	0.031&	0.047\\
    28.&{\em SBCD23}&1&3&	0.031&	0.&	0.031&	0.047\\
    29.&{\em SBCD24}&1&4&	0.655&	0.016&	0.671&	0.312\\
     \hline
    \end{tabular}
\end{center}
   \end{Table}

We have implemented our algorithms with Maple 16. More specifically, Wu's method for computing parametric triangular decompositions introduced in Section \ref{sfuhaoshuoming} is implemented as a function ${\tt wusolve}$ and Algorithm \ref{ALPRSD} is implemented as a function ${\tt WRSD}$. We ran many examples collected from other papers \cite{Montes,sun,changbo}. At the same time, we compare the running time with {\tt Triangularize}\footnote{Please find more details on Triangularize from the help document of Maple 16.} in {\tt RegularChains} which can compute regular decompositions of given polynomial systems with parameters.
Throughout this section,  all the results are obtained in Maple 16 using an Intel(R) Core(TM) i5 processor (3.20GHz CPU and 2.5 GB total memory) and Windows 7 (32 bit). The empirical data about timings is presented in Table \ref{res1}.

% We select some of the examples to show running time of {\tt RDU} and $\tt{Triangularize}$.
In Table \ref{res1},  the column marked $U$ and $X$ mean the number of parameters and variables, respectively. The column marked $\tt{wusolve}$ means the time used by $\tt{wusolve}$ at the first step in $\tt{RDU}$ (see Algorithm \ref{ALsus}). The column marked $\tt{TSToRS}$ means the time used by $\tt{TSToRS}$. The column marked $\tt{Triangularize}$ means the time used by $\tt{Triangularize}$.  Some data which shows $0.$ means that the data is less than $0.001$ and ignored by system.

From  Table \ref{res1}, we find that the cost of $\tt{wusolve}$ takes up a majority of the total time in most examples when solving practical problems as shown. Comparing with {\tt Triangularize}, for some examples, our algorithm is faster than {\tt Triangularize}.   As for the example {\em Cheaters-homotopy-easy} and {\em Cheaters-homotopy-hard}, the time used by {\tt WRSD} seems too much. This is not reasonable and optimization should be done in the future. %From the result of examples, different triangularize methods fit for different questions.

\section{Conclusions}\label{con1}

The focus of this paper is how to decompose a parametric system into regular systems at ``parameter level" and the property of this decomposition under specification. We provide an algorithm for computing GRDs of parametric systems and the related RDU varieties simultaneously no matter the systems are generic zero-dimensional or positive-dimensional, which is a generalization of our earlier work in \cite{tang} for generic zero-dimensional case. Then any parametric system in $K[U][X]$ can be decomposed into finitely many regular systems and the decomposition is stable at any parameter value in the complement of the associated RDU variety of the parameter space. That is to say, once we obtain such decomposition, all the solutions of the original system are expressed by some regular systems, except for some possible solutions over the parameter values on the associate RDU variety.

Note that, using  Algorithm \ref{ALsus} of generic regular decomposition, we may get a complete decomposition of a given parametric system ${\P}$ step by step. A rough procedure is as follows. Firstly, for certain variables $X$ and parameters $U$, we call ${\tt RDU}({\P},X)$ and get a set of regular systems and a parametric polynomial $B \in K[U]$. Then, let $U_1= U\backslash \{u\}$  and $X_1= X \cup \{u\}$ where $u \in U$ and we call ${\tt RDU}({\P} \cup \{B\},X_1)$ to get a set of regular systems and a parametric polynomial $B_1 \in K[U_1]$. Continuing this iteration, until $U_i=\emptyset$ and $B_i \in K$ for some $i$,  we can get a complete decomposition finally. This is what we called {\em hierarchical strategy} in Section \ref{SecIntro}. What can we  benefit from this strategy? According to this method, we can stop at any step of the iteration especially when the computation is hard to be finished. Then we get a generic regular decomposition and a set of parametric polynomials which can determinate a RDU variety (low dimensional variety). For some huge problems, limited by the computational capacity of micro computer, we can get a partial solution, which is useful if one cannot get any information by other complete methods. Of course, the procedure should be described clearly and proved to be correct. That is one of our future work.

\section*{Acknowledgements}
The work is partly supported by the National Natural Science Foundation of China (Grant No.11271034, No.11290141)
and the project SYSKF1207 from SKLCS, IOS, the Chinese Academy of Sciences.   Thank Changbo  Chen and Yao Sun for providing a great deal of test-systems.


\begin{thebibliography}{99}


\bibitem{marco} P. Aubry, D. Lazard, M. Moreno Maza, On the Theories of
Triangular Sets. {\it J. Symb. Comp.,} {\bf 28}: 105--124, 1999.

\bibitem{iss} C.B. Chen, J. Davenport, J.P. May, M. Moreno Maza, B.C. Xia, R. Xiao,
Triangular Decomposition of Semi-Algebraic Systems. {\it Proc. ISSAC 2010}, 187--194, ACM Press, 2010.

\bibitem{isss} C.B. Chen, J. Davenport, M. Moreno Maza, B.C. Xia, R. Xiao,
Computing with Semi-Algebraic Sets Represented by Triangular Decomposition. {\it Proc. ISSAC 2011}, 75--82, ACM Press, 2011.

\bibitem{changbo} C.B. Chen,  O. Golubitsky, F.  Lemaire, M.  Moreno Maza, W. Pan,  Comprehensive Triangular Decomposition.  {\it Proc.  CASC 2007,  LNCS 4770}, 73--101,   2007.

\bibitem{cox}  D. Cox,  J. Little,  D.  O'Shea, Using Algebraic
Geometry, Springer, New York,  1998

\bibitem{cox1}  D. Cox,  J. Little, D. O'Shea, Ideals, Varieties and Algorithms, Springer,  New York,  2007

\bibitem{gxs1992} X.S.  Gao, S.C. Chou, Solving Parametric Algebraic Systems.  {\it  Proc. ISSAC 1992}, 335--341, ACM Press, 1992.

\bibitem{changbo2011} C.B. Chen, M. Moreno Maza, Algorithms for Computing Triangular Decomposition of Polynomial Systems, {\it J. Symb. Comp.,} {\bf 47}: 610--642, 2012

\bibitem{zxq} S.C. Chou, Mechanical Geometry Theorem Proving.  {\it D. Reidel Publishing Company}, 1987.

%X. Gao, X. Hou, J. Tang and H. Cheng: Complete Solution Classification for the Perspective-Three-Point Problem. {\it MMRC, AMSS, Academia Sinica, Beijing. No. 20, December 2001. MM Research Preptints}: 23-47, 2001.
\bibitem{gxs} X.S. Gao, X R. Hou, J.L. Tang, H.F. Chen, Complete Solution Classification for the Perspective-Three-Point Problem. {\it IEEE Tran. on PAMI}, 930--943, 25(8), 2003.

\bibitem{kalk} M. Kalkbrener, A Generalized Euclidean Algorithm for Computing for Computing Triangular
Representationa of Algebraic Varieties. {\it J. Symb. Comput. }, {\bf 15}, 143--167, 1993.

\bibitem{sun} D. Kapur, Y. Sun, D.K. Wang, A New Algorithm for Computing Comprehensive
    Gr\"obner Systems. {\it Proc. ISSAC 2010}, 25--28, ACM Press, 2010.


\bibitem{maza} M. M. Maza, On Triangular Decompositions of Algebraic Varieties. Technical Report TR 4/99, NAG Ltd, Oxford, UK, 1999. Presented at the MEGA-2000 Conference, Bath, England.

\bibitem{mishra} B. Mishra, Algorithmic Algebra. {\it Springer-Verlag}, New York, 1993.

\bibitem{Montes}A. Montes, T. Recio, Automatic Discovery of Geometry Theorems
Using Minimal Canonical Comprehensive Gr\"obner Systems. {\it ADG 2006, LNAI 4869}, 113--138, 2007.

\bibitem{KN}K. Nabeshima, A Speed-Up of the Algorithm for Computing Comprehensive Gr\"obner Systems. {\it Proc. ISSAC2007}, 299--306, 2007.

\bibitem{SS}A. Suzuki, Y. Sato, An Alternative Approach to Comprehensive
Gr\"obner Bases.  {\it Proc. ISSAC2002}, 255--261, 2002.


\bibitem{newSS} A. Suzuki, Y. Sato, A Simple Algorithm to Compute Comprehensive
Gr\"obner Bases. {\it Proc. ISSAC2006}, 326--331, ACM Press, 2006.

\bibitem{tang} X.X. Tang, Z.H. Chen, B.C. Xia, Generic Regular Decompositions for Generic Zero-Dimensional Systems, accepted by {\it Science China: Information Sciences}, {\it http://arxiv.org/abs/1208.6112}, 2012

\bibitem{dkwang} D.K.  Wang,  Zero Decomposition Algorithms for System
of Polynomial Equations.   {\it Computer Mathematics,  Proc.  ASCM
2000},  67--70,  2000.

\bibitem{monte} A.  Montes, T. Recio,  Automatic Discovery of Geometry Theorems
Using Minimal Canonical Comprehensive Gr\"obner Systems. In: Botana F., Recio T., eds. ADG 2006, LNAI 4869. Berlin Heidelberg: Springer-Verlag Berlin Heidelberg,  2007. 113--138
\bibitem{wangi} D.M. Wang, Computing Triangular Systems and Regular Systems.
{\it J. Symb. Comput. }, {\bf 30}: 221--236, 2000.

\bibitem{wang} D.M. Wang,  {\it Elimination methods}.  Springer, New
York, 2001.

\bibitem{wangzero} D.M. Wang,  {\it Characteristic Sets and Zero Struct of Polynomial Sets}. Research Institure for Symbolic Computation

\bibitem{wangEpsilon} D.M. Wang, {\it Elimination Practice: Software Tools and Applications}. Imperial College Press, London, 2004.

\bibitem{CGS} V. Weispfenning, Comprehensive Gr\"obner bases. {\it J. Symb. Comp. }, {\bf 14}: 1--29, 1992.

\bibitem{wu} W.-t.  Wu,  Basic Principles of Mechanical Theorem
Proving in Elementary Geometries.  {\it J.  Syst.  Sci.  Math.  Sci. },
{\bf 4}: 207--235,  1984.


\bibitem{discover} B.C. Xia, DISCOVERER: A Tool for Solving Semi-Algebraic Systems. {\it ACM Commun. Comput. Algebra. }, {\bf 41(3)}: 102--103, 2007.

\bibitem{yhx01}
L. Yang, X.R. Hou,  B.C. Xia,
A Complete Algorithm for Automated Discovering of a Class of
  Inequality-Type Theorems.
{\it Science in China, Series \bf{F}}, {\bf 44(6)}:33--49, 2001.

\bibitem{yang} L. Yang, B C. Xia, Real Solution Classifications of a
Class of Parameteric Semi-Algebraic Systems. {\it In Algorithmic
Algebra and Logic-Proc. the A3L2005}, 281--289, Herstellung
and Verlag, Norderstedt, 2005.

\bibitem{xia} L. Yang, B.C. Xia,  {\it Automated Proving and Discovering Inequalities} (in Chinese).
Science Press, Beijing, 2008.

\bibitem{zjzi}  L. Yang, J.Z. Zhang, Searching Dependency between Algebraic Equations: an Algorithm
Applied to Automated Reasoning. {\it Technical Report ICTP/91/6, International Center for Theoretical
Physics}, 1--12, 1991.

\bibitem{zjzii} L. Yang, J.Z. Zhang, Searching Dependency between Algebraic Equations: an Algorithm
Applied to Automated Reasoning. {\it Artificial intelligence in Mathematics}, 147--156, Oxford University Press, 1994.

\bibitem{zjziii} L. Yang, J.Z. Zhang, X.R. Hou, A Criterion of Dependency between Algebraic Equations and its Applications. {\it Proc. International Workshop on Mathematics Mechanization'1992 (Wu W-T, Cheng M-D eds)}, 110--134, International Academic Publishers, Beijing, 1992.

\bibitem{zjziv} L. Yang, J.Z.  Zhang, X.R. Hou, Non-Linear Algebraic Formulae and Theorem Automated Proving , Shanghai Education Technology Publishers, Beijing, 1992.

\end{thebibliography}
\end{document}